\newtheorem{definition}{Definition}[section]
\newtheorem{theorem}{Theorem}[section]
\newtheorem{lemma}{Lemma}[section]
\theoremstyle{remark}
\newtheorem{remark}{Remark}[section]
\newtheorem{example}{Example}[section]
\newcommand{\R}{\mathbb{R}}
\newcommand{\abs}[1]{\left|#1\right|}
\newcommand{\norm}[1]{\left\|#1\right\|}
\newcommand{\comm}[2]{\left[#1,#2\right]}
\newcommand{\eqref}[1]{\eref{#1}}
\begin{document}
\title[Robustness of quantum symmetries against  perturbations]{Robustness of quantum symmetries against  perturbations}

\author{Paolo Facchi$^{1,3}$, Marilena Ligab\`o$^2$ and Vito Viesti$^{1,3}$}

\address{$^1$Dipartimento di Fisica, Universit\`a  di Bari, I-70126 Bari, Italy\\$^2$Dipartimento di Matematica, Universit\`a  di Bari, I-70125 Bari, Italy\\$^3$INFN, Sezione di Bari, I-70126 Bari, Italy
}
\ead{vito.viesti@uniba.it}
\vspace{10pt}

\begin{abstract}
We investigate  quantum symmetries in terms of their large-time stability  with respect to perturbations of the Hamiltonian. We find a complete algebraic characterization of the set of symmetries robust against a single  perturbation and we use such result to characterize their stability with respect to arbitrary sets of perturbations.
\end{abstract}

\section*{Introduction}
Symmetries and conserved quantities play a crucial role in the dynamics of  physical systems~\cite{wigner_symmetry,symmetry}. 

 The fundamental connection between continuous symmetries and conservation laws revealed by Emmy Noether's theorem~\cite{Arnold,Fasano} is pivotal in understanding and constructing physical theories.

Conserved quantities can be characterized in terms of the generator of the evolution, i.e.\ the Hamiltonian of the system. In particular, if we focus on quantum systems,  it is well known that the conserved quantities are the operators that commute with the Hamiltonian~\cite{Sakurai}. Therefore, given a Hamiltonian, it is possible, at least in principle, to find all its symmetries. 

However, in the description of a physical system one usually neglects some interactions. Then the Hamiltonian used in the model differs by a small perturbation from the real one.
 A crucial requirement of a proper physical model is that it should be robust against perturbations. For example, if the dynamics of a physical system is modelled by differential equations it is of the utmost importance that a continuous dependence of the solutions with respect to the coefficients and /or the initial conditions be established. 
 
 Such robustness plays also a pivotal role from a practical point of view, for instance, in analog quantum simulations, where
one tries to mimic the  dynamics of the system of interest on another quantum system~\cite{simul2}. Clearly, it is impossible to avoid errors and imperfections in the system one uses to simulate the model. It is then essential to study the stability of the prediction of the model with respect to perturbations of the Hamiltonian, which describe these unavoidable imperfections of the real world hardware~\cite{simul,qsimul}.

The question that we address in this work is the following: what happens to a conserved quantity under small perturbations of the Hamiltonian? Is it still (approximately) conserved? 

As one expects, in general the answer is negative. Indeed, a generic perturbation does not commute with the symmetry of the unperturbed system.  However, if we consider small times, by  standard perturbation theory, the perturbed evolution can be shown to drive a conserved quantity very close to its initial value: so it is true that even if the quantity is no longer conserved, at least it shows a small-time robustness against small perturbations~\cite{kam,onebound}.

At large times this behavior is no longer true generally: it is possible that the small effect of the perturbation accumulates over time and eventually a symmetry is drifted far away from its initial value.  However, there are some special conserved quantities that, despite the perturbation, show a stability against the perturbed evolution that remains uniform in time. We will call them robust symmetries; if this is not the case we will call them fragile. It is crucial to find out what  the robust symmetries are, since they are the relevant ones from a physical perspective.

This problem is in close analogy to the Kolmogorov-Arnold-Moser (KAM) theory of classical mechanics~\cite{Arnold}. The most famous example  is the study of the motion of the planets in the solar system.  In the description of the motion of a planet, as a first approximation one usually considers only its interaction
with the sun,  neglecting the presence of all the other planets.  It is well known that in such a case the planet moves along closed elliptical orbits. However, the interactions with the other planets,  despite negligible with respect to the sun, are not zero: and their effect could accumulate over long times.  KAM theorems ensure the existence of large-time stable planetary orbits along stable tori~\cite{Dumas}. Robust symmetries in quantum mechanics are the quantum analogues of those KAM tori.

In this paper we establish a classification of robust symmetries: we study the stability with respect to an arbitrary set of perturbations. In particular, we will answer the following question: given a (relatively bounded) perturbation, what are the symmetries uniformly stable against the perturbed dynamics?

We will see that a perturbation induces a family of subprojections of the spectral projections of the unperturbed Hamiltonian. In Theorem~\ref{main} we will prove that the robust symmetries are exactly the ones that commute with these subprojections.

After analyzing the single-perturbation stability, we will use this result to see what are the symmetries robust against an arbitrary set of perturbations $\mathcal{P}$. A very interesting physical case is when the set of perturbations is made by the operators that commute with a certain symmetry of the Hamiltonian. If for instance we deal with a system that is rotationally symmetric, one could require that also the perturbations should satisfy the same requirement. In Theorem~\ref{thm:restricted} we see that in such a case the set of robust symmetries has a nice algebraic structure.

Then, we will see what are the symmetries robust against all possible (relatively bounded) perturbations: in Theorem~\ref{complete} we will show that they are the bounded functions of the Hamiltonian, a result proved for finite-dimensional quantum systems~\cite{kam, eternal}.
Here we will extend this result to an infinite-dimensional Hilbert space, and prove its validity for unbounded Hamiltonians.

An important difference between the finite- and the infinite-dimensional case arises
in the wandering range of robust symmetries, i.e.\ the deviation of their perturbed evolution from their unperturbed value. 
By constructing explicit examples, we show in fact that in an infinite-dimensional Hilbert space the wandering range of a robust symmetry can be   $O(\varepsilon^\gamma)$ with arbitrarily small positive $\gamma$, at striking variance with the finite-dimensional case where the wandering range is always $O(\varepsilon)$, as the strength of the perturbation vanishes, $\varepsilon \to 0$.

Finally, in Theorem \ref{thm:adinv} we show how conserved quantities of the perturbed Hamiltonian can be constructed  starting from a robust symmetry. Indeed, when the Hamiltonian is perturbed, a symmetry is no longer conserved. However, if the conserved quantity is robust, it is possible to continuously deform it into a conserved quantity of the perturbed dynamics. In other words if a symmetry is robust, it is not lost but only bent by the perturbation. We will call these deformed symmetries, quantum adiabatic invariants, in analogy with classical adiabatic invariants~\cite{Arnold}.

A related approach was considered in~\cite{gorini}: here equilibrium states are defined as the density operators which are quantum adiabatic invariants with respect to all possible perturbations of the Hamiltonian. Under some additional hypotheses on the Hamiltonian, they find that the density operator must be a function of the Hamiltonian, in accordance with our general findings. This allows to characterize  Gibbs states 
as the only states robust against perturbations and then as equilibrium states~\cite{stability}.

More generally, it is not difficult to see how our results could be relevant in quantum thermodynamics~\cite{Calabrese},
where symmetries play a crucial role, as in the investigation of the dynamics of quantum quenches~\cite{glimmers}, or in the description of thermalization processes~\cite{Mori_2018} and of ergodicity breaking~\cite{ergodicity1, ergodicity2}.

The paper is organized as follows: in section ~\ref{sec:1} we briefly recall the description of the quantum evolution in the Heisenberg picture, the definition of conserved quantities and their strict connection with continuous symmetries. In section~\ref{sec:2} we give the definition of fragile and robust symmetries against  a given set of perturbations. Section~\ref{sec:3} is the core of the paper: we give the complete algebraic characterization of symmetries robust against a fixed perturbation. In section~\ref{sec:4} we characterize the symmetries robust against an arbitrary set of perturbations, and look at their algebraic structure in some interesting situations.
 Then, in section~\ref{sec:adiabatic} we look at the wandering range of robust symmetries and at their connection with adiabatic invariants. Finally, in Sec.~\ref{sec:conclusions} we draw our conclusions.

\section{Notation and preliminaries}
\label{sec:1}
Let $\mathcal{H}$ be a complex separable Hilbert space. We focus our attention on closed quantum systems: in such a case the evolution of the system is described by a one-parameter strongly continuous unitary group $\{\rme^{-itH}\}_{t \in \R}$, where the Hamiltonian operator $H$ is the generator of the group and is a (possibly unbounded) self-adjoint operator with domain $D(H) \subseteq \mathcal{H}$~\cite{Reed}. We will use the Heisenberg picture, where the observables evolve in time and the states remain fixed. More explicitly, the time evolution of any bounded operator $A \in B(\mathcal{H})$ is described by the following law:
\begin{equation}
	t \in \R \mapsto A_t:=\rme^{{\rmi}tH}A\rme^{-{\rmi}tH} \in B(\mathcal{H}),
	\label{eq:evolution}
\end{equation}
where we denote with $A_t$ the evolution of the operator $A$ at time $t$~\cite{Teschl}.

There are observables that play a privileged role in the description of a system:  conserved quantities. They are the operators that, despite the evolution described in~\eqref{eq:evolution}, do not depend on time. Because of  Noether's theorem, there is a one to one correspondence between conserved quantities and continuous symmetries of a system: we are going to use the terms ``conserved quantities'' and ``symmetries'' interchangeably.
\begin{definition}
We say that $S \in  B(\mathcal{H})$ is a symmetry of the Hamiltonian $H$ if for all $t \in \R$:
\begin{equation}
\rme^{{\rmi}tH}S\rme^{-{\rmi}tH}=S.
\end{equation}
This is equivalent to the vanishing of the commutator: $[S, \rme^{{\rmi}tH}]:=S\rme^{{\rmi}tH}- \rme^{{\rmi}tH}S= 0$, for all times $t\in\R$. Thus, the set of all the symmetries of the Hamiltonian $H$ coincides with the commutant of $H$, denoted with $\{H\}'$. 
\end{definition}

\begin{remark}
\label{rmk:commutant}
We recall that given a set $\mathcal{C}$ of bounded operators, its commutant is given by
	\begin{equation}
	  \mathcal{C}'=\{A \in B(\mathcal{H}) : 
		[A, C]=0,  \text{ for all } C\in \mathcal{C}  \}.
		\end{equation}
	The commutant of a (not necessarily bounded) self-adjoint operator $H$ is the von Neumann algebra of all bounded operators which commute with the family $\{P_H(\Omega)
	\}$ of the spectral projections   of $H$:
	\begin{equation}
		  \{H\}'= \{ P_H(\Omega) \,:\, \Omega \subseteq \R,\; \Omega \text{ measurable} \}' \, .
	\end{equation}
	See e.g.~\cite[Prop.14.3.5]{oliv}. This definition naturally extends to many self-adjoint operators. 
	
	Notice that the commutant of $H$ is  virtually independent of its spectrum and does not change under a reparametrization of the spectrum, as far as it is one to one.  In other words, $\{H\}'=\{f(H)\}'$ for all $f:\R \to \R$ one to one.
	Moreover, $\{H\}'$ has a privileged Abelian subalgebra, the bicommutant of $H$, which is generated by the spectral projections of $H$:
	 \begin{equation}
		\{H\}''=\{P_H(\Omega) :\Omega \text{ measurable subset of } \R  \}'' \subset \{H\}'.
		\label{eq:H''def}
	\end{equation}
	By von Neumann's bicommutant theorem, $\{H\}''$ is the Abelian von Neumann algebra of the measurable bounded functions of $H$. Thus, since $\{H\}'=\{H\}'''$, the commutant of $H$  is the von Neumann algebra of bounded operators commuting with all bounded Borel functions of $H$, namely,
	\begin{equation}
 \{H\}'= 
		\{f(H) \,:\; f\!:\!\R \to\R, \; f\text{ bounded Borel}   \}' \, .
	\end{equation}
	\end{remark}
\vspace{2mm}

Our objective is to introduce a refined classification of quantum symmetries $S\in\{H\}'$, distinguishing them between \emph{fragile} and \emph{robust} in relation to their time evolution with respect to perturbed Hamiltonians. 

Let us introduce a class of admissible perturbations (that preserve the self-adjointness of the Hamiltonian $H$).
\begin{definition}
A linear operator $V$ on $\mathcal{H}$ is \emph{$H$-bounded} if 
\begin{itemize}
\item $D(H) \subset D(V)$; 
\item there are nonnegative constants $a, b \geq 0$ such that for all $\psi \in D(H)$:
\begin{equation}\label{H-bound}
\| V \psi \| \leq a  \| H \psi\| + b \| \psi \|.
\end{equation}
\end{itemize}
The greatest lower bound $a_V$ of all possible $a$ in (\ref{H-bound}) is called \emph{$H$-bound of $V$}. 
\end{definition}
The Kato-Rellich theorem ensures the self-adjointness of the perturbed Hamiltonian $H+ \varepsilon V$, if $V $ is a symmetric $H$-bounded operator and $\varepsilon \in \R$ is small enough~\cite{Kato_per}.
\begin{theorem}[Kato-Rellich]
\label{th:perturbation}
Let $V$ be a symmetric and $H$-bounded operator with $H$-bound $a_V$. Then for all $\varepsilon \in \R$, with $\abs{\varepsilon} a_V <1$, the operator $H+\varepsilon{V}$
is self-adjoint on $D(H)$. 
\end{theorem}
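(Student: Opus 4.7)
The plan is to reduce the self-adjointness of $H + \varepsilon V$ on $D(H)$ to the standard basic criterion: a symmetric densely defined operator $T$ is self-adjoint if and only if $\mathrm{Ran}(T \pm \rmi\mu) = \mathcal{H}$ for some $\mu > 0$. Symmetry of $H + \varepsilon V$ on $D(H)$ is immediate from the symmetry of $H$ and $V$ together with the inclusion $D(H) \subset D(V)$, so the whole content of the theorem lies in the surjectivity of $H + \varepsilon V \pm \rmi\mu$ onto $\mathcal{H}$.

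The core step is an operator-norm estimate for $V (H + \rmi\mu)^{-1}$ on $\mathcal{H}$. Given $\phi \in \mathcal{H}$, set $\psi = (H + \rmi\mu)^{-1} \phi \in D(H)$. The self-adjointness of $H$ yields the Pythagorean identity $\|(H + \rmi\mu)\psi\|^2 = \|H\psi\|^2 + \mu^2 \|\psi\|^2$, so that $\|H\psi\| \leq \|\phi\|$ and $\|\psi\| \leq \|\phi\|/\mu$. Inserting these into the $H$-bound (\ref{H-bound}) gives
\[
\|V (H + \rmi\mu)^{-1}\| \leq a + \frac{b}{\mu},
\]
valid for any admissible pair $(a, b)$ in (\ref{H-bound}). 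Because $|\varepsilon| a_V < 1$ and $a_V$ is the infimum of admissible $a$'s, I can choose an admissible pair with $a > a_V$ still satisfying $|\varepsilon| a < 1$, and then take $\mu$ large enough that $|\varepsilon|(a + b/\mu) < 1$.

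With $\|\varepsilon V (H + \rmi\mu)^{-1}\| < 1$ secured, the operator $I + \varepsilon V (H + \rmi\mu)^{-1}$ is invertible on $\mathcal{H}$ by a Neumann series. The factorization
\[
H + \varepsilon V + \rmi\mu = \bigl(I + \varepsilon V (H + \rmi\mu)^{-1}\bigr)(H + \rmi\mu)
\]
on $D(H)$ then shows that $H + \varepsilon V + \rmi\mu$ maps $D(H)$ bijectively onto $\mathcal{H}$; the same argument with $-\mu$ in place of $\mu$ handles the other deficiency index. The criterion above therefore delivers self-adjointness of $H + \varepsilon V$ on $D(H)$.

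The only delicate point is the selection of the constants: one must use the strict inequality $|\varepsilon| a_V < 1$ and the definition of $a_V$ as an infimum to find an admissible $a$ satisfying $|\varepsilon| a < 1$. The accompanying $b$ may blow up as $a \downarrow a_V$, but this is harmless because it enters the bound only through $b/\mu$ and can be absorbed by taking $\mu$ sufficiently large.
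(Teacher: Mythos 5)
Your argument is correct and complete: it is the standard Kato--Rellich proof via the basic criterion of self-adjointness, the resolvent estimate $\norm{V(H+\rmi\mu)^{-1}}\leq a+b/\mu$, and the Neumann-series factorization of $H+\varepsilon V\pm\rmi\mu$. The paper does not prove this theorem at all --- it is recalled as a classical result with a reference to Kato --- and your write-up coincides with that classical argument, including the one genuinely delicate point (picking an admissible $a>a_V$ with $\abs{\varepsilon}a<1$ and then absorbing the possibly large $b$ by taking $\mu$ large), which you handle correctly.
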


Obviously, if $V$ is bounded (which is always the case for a finite-dimensional Hilbert space $\mathcal{H}$), then it has $H$-bound 0 and Kato-Rellich holds for all $\varepsilon\in\R$.

\section{$\mathcal{P}$-robustness}\label{sec:2}
Consider a set $\mathcal{P}$ of symmetric operators with $H$-bound less than 1. Then, $H+\varepsilon V$ is self-adjoint for all $|\varepsilon|<1$ and $V\in\mathcal{P}$. 
We want to classify the symmetries in terms of their large-time evolution with respect to the family of perturbed Hamiltonians 
\begin{equation}\label{eqnHe}
H(\varepsilon):=H+\varepsilon V,  \qquad \varepsilon \in (-1,1),
\end{equation}
with $V \in \mathcal{P}$, according to the following definition. 
\begin{definition}\label{def:Pfr} Let $S \in \{H\}'$ be a symmetry of the Hamiltonian $H$, and let the \emph{set $\mathcal{P}$ of perturbations of $H$}  be a set of $H$-bounded symmetric operators.
\begin{itemize}
\item $S$ is \emph{$\mathcal{P}$-robust} if for all $V\in\mathcal{P}$ and for all $\psi\in\mathcal{H}$:
\begin{equation}
    \sup_{t\in\mathbb{R}}\norm{\left(\rme^{{\rmi}t(H+ \varepsilon V)}S\rme^{-{\rmi}t(H+ \varepsilon V)}-S\right)\psi} \to 0, \quad \text{as } \varepsilon \to 0.
\end{equation} 
The set of all $\mathcal{P}$-robust symmetries is denoted with $\mathcal{R}_{\mathcal{P}}(H)$. It contains the set $\hat{\mathcal{R}}_{\mathcal{P}}(H) $ of  \emph{$\mathcal{P}$-unbroken} 
 symmetries, for which $\rme^{{\rmi}t(H+ \varepsilon V)}S\rme^{-{\rmi}t(H+ \varepsilon V)}=S$.
\item $S$ is \emph{$\mathcal{P}$-fragile} if it is not $\mathcal{P}$-robust, i.e. if there exists a perturbation $V\in\mathcal{P}$ and a vector $\psi\in\mathcal{H}$ such that
\begin{equation}
    	\limsup_{\varepsilon\rightarrow0} \left[\sup_{t\in\mathbb{R}}\norm{\left(\rme^{{\rmi}t(H+ \varepsilon V))}S\rme^{-{\rmi}t(H+ \varepsilon V)}-S\right)\psi} \right]>0.
	\label{eq:Pfragile}
\end{equation}
\item  If $\mathcal{P}$ contains only one element, namely $\mathcal{P}=\{V\}$, we say that $S$ is \emph{$V$-robust}. If $S$ is $V$-robust for all possible symmetric $H$-bounded perturbations $V$, we say that $S$ is a \emph{completely robust} symmetry. The set of all  completely robust symmetries is denoted with $\mathcal{R}(H)$.
\end{itemize}
\end{definition}
In other words, a symmetry $S$ 
is $\mathcal{P}$-robust if  $\rme^{{\rmi}t(H+ \varepsilon V)}S\rme^{-{\rmi}t(H+ \varepsilon V)} \psi$ remains close, for small $\varepsilon$, to its unperturbed value $\rme^{{\rmi}t H}S\rme^{-{\rmi}t H} \psi=S\psi$ for every time $t\in \R$, for every possible perturbation $V \in \mathcal{P}$ and for every possible state $\psi \in \mathcal{H}$,
that is
\begin{equation}
	\rme^{{\rmi}t(H+ \varepsilon V)}S\rme^{-{\rmi}t(H+ \varepsilon V)} \approx S, \quad \text{for } \abs\varepsilon\ll 1, \quad \text{ uniformly in time $t$}.
\end{equation}
In particular, $S$ is $\mathcal{P}$-unbroken if it remains a symmetry of all the perturbed Hamiltonians, namely $\rme^{{\rmi}t(H+ \varepsilon V)}S\rme^{-{\rmi}t(H+ \varepsilon V)}=S$ for all times $t\in\R$ and perturbations $V\in\mathcal{P}$.

On the other hand, $S$ is $\mathcal{P}$-fragile if, however small  $\varepsilon$ is, there is a  perturbation $V \in \mathcal{P}$ and a  vector $\psi \in \mathcal{H}$ such that  $\rme^{{\rmi}t(H+ \varepsilon V)}S\rme^{-{\rmi}t(H+ \varepsilon V)} \psi$ drifts away from $S\psi$ and their distance accumulates over time.  

In the next section we will study in detail the robustness of a symmetry against a single perturbation.
\section{Robustness against a single perturbation}
\label{sec:3}
In order to perform our analysis on the symmetries, we introduce an assumption on the unperturbed self-adjoint Hamiltonian $H$: we assume that $H$ has \emph{compact resolvent}, i.e. $(H-z)^{-1}$ is a compact operator for some $z \in \mathbb{C}$. In such a case the spectrum  of $H$ consists entirely of isolated eigenvalues with finite multiplicity and its spectral decomposition reads
    \begin{equation}
            H\psi=\sum_{k\ge1}h_kP_k\psi, \qquad \forall\psi\in{D(H)},
            \label{eq:specresH}
    \end{equation}
 where  $\{h_k\}_{k \geq 1}\subseteq\R$ are the distinct eigenvalues of $H$ and  $\{P_k\}_{k \geq 1}$ are its finite-rank eigenprojections:
\begin{equation}
P_k^\dagger=P_k, \quad P_kP_l=\delta_{kl}P_k, \quad \forall k,l \geq 1, \qquad 
\sum_{k\ge1}P_k\psi=\psi, \quad \forall\psi\in\mathcal{H}.
\end{equation}
A paradigmatic example that can be kept in mind is the Hamiltonian of the ($n$-dimensional isotropic) harmonic oscillator. Obviously, if the Hilbert space $\mathcal{H}$ is finite dimensional all Hamiltonians have compact resolvent.

Now we recall an important result on the spectral properties of the family  $\{H(\varepsilon)\}_{\varepsilon \in (-1,1)}$ of perturbations of $H$~\cite{Kato_per}.

\begin{theorem}[Kato] \label{th:KATO}
\begin{enumerate}
Let $H$ be a self-adjoint operator with compact resolvent and spectral resolution~\eqref{eq:specresH}.
Let $V$ be a symmetric operator with $H$-bound less than 1. Then,
\item for all $\varepsilon \in (-1,1)$, the perturbed Hamiltonian  $H(\varepsilon)$   in (\ref{eqnHe}) has compact resolvent and its spectral decomposition reads
    \begin{equation}
    H(\varepsilon)\psi=\sum_{n\ge1}h_n(\varepsilon)P_n(\varepsilon)\psi,\qquad\forall\psi\in{D(H)},
    \label{eq:spectral}
\end{equation}
 where  $\{h_n(\varepsilon)\}_{n \geq 1}$ are the eigenvalues of $H(\varepsilon)$ and $\{P_n(\varepsilon)\}_{n \geq 1}$ are its finite-rank eigenprojections;
\item for all $n \geq 1$, the maps $\varepsilon \in (-1,1) \mapsto h_n(\varepsilon) \in \R$, and $\varepsilon \in (-1,1) \mapsto P_n(\varepsilon) \in{B}(\mathcal{H})$, 
are analytic, with $h_n\neq h_m$ for $n\neq m$;
\item the family $\{P_n(0)\}_{n \geq 1}$ is a family of subprojections of  $\{P_k\}_{k \geq 1}$, namely for all $n \geq 1$ there is a unique $k \geq 1$ such that 
$P_n(0) P_k= P_k P_n(0)= P_n(0)$,
so that $\textrm{Range}(P_n(0)) \subseteq \textrm{Range}(P_k)$.
\end{enumerate}
\end{theorem}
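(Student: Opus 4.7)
The three statements are classical applications of Kato's analytic perturbation theory: the family $\{H(\varepsilon)\}_{\varepsilon\in(-1,1)}$ is a self-adjoint holomorphic family of type (A), since $D(H(\varepsilon))=D(H)$ does not depend on $\varepsilon$ and $\varepsilon\mapsto H(\varepsilon)\psi$ is affine (hence entire) for every $\psi\in D(H)$. My plan is to organize the proof into three pieces, using the second resolvent identity for (i) and Riesz projections plus Rellich's real-analytic splitting for (ii) and (iii).

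For part (i), the $H$-boundedness of $V$ with bound $a_V<1$ implies that $V(H-z)^{-1}$ is bounded for every $z\in\rho(H)$; a routine estimate shows that $\norm{\varepsilon V(H-z)^{-1}}<1$ once $|\mathrm{Im}\,z|$ is large enough, so a Neumann series argument gives
\[
(H(\varepsilon)-z)^{-1}=(H-z)^{-1}\bigl(I+\varepsilon V(H-z)^{-1}\bigr)^{-1}.
\]
This is the product of a compact and a bounded operator, hence compact, and the standard spectral theorem for self-adjoint operators with compact resolvent delivers the decomposition \eqref{eq:spectral}.

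For parts (ii) and (iii) I would work locally around each unperturbed eigenvalue $h_k$. Pick a positively oriented circle $\Gamma_k\subset\rho(H)$ enclosing $h_k$ and no other point of $\sigma(H)$; by continuity of the resolvent, $\Gamma_k\subset\rho(H(\varepsilon))$ for $|\varepsilon|$ small, and the Riesz projection
\[
Q_k(\varepsilon):=-\frac{1}{2\pi\rmi}\oint_{\Gamma_k}(H(\varepsilon)-z)^{-1}\,dz
\]
is analytic in $\varepsilon$ with $Q_k(0)=P_k$; since its rank is integer-valued and continuous in $\varepsilon$, it equals the rank of $P_k$. Restricted to the range of $Q_k(\varepsilon)$, the operator $H(\varepsilon)Q_k(\varepsilon)$ is a Hermitian operator on a finite-dimensional space depending analytically on $\varepsilon$, and Rellich's theorem supplies real-analytic eigenvalues $h_n(\varepsilon)$ and eigenprojections $P_n(\varepsilon)$ branching off from $h_k$. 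Global analyticity on $(-1,1)$ then follows by analytic continuation along the real axis. Claim (iii) is built into the construction: each $P_n(\varepsilon)$ sits below $Q_k(\varepsilon)$, so at $\varepsilon=0$ one has $P_n(0)P_k=P_kP_n(0)=P_n(0)$.

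The main obstacle is the analytic splitting of degenerate unperturbed eigenvalues. If $P_k$ has rank greater than one, the generic (non-self-adjoint) perturbation theory would only guarantee Puiseux branches at $\varepsilon=0$; it is precisely the symmetry of $V$ that upgrades these to honest real-analytic branches, via Rellich's theorem applied to the Hermitian family $H(\varepsilon)Q_k(\varepsilon)$, and this step has to be handled carefully inside each finite-dimensional range. A secondary subtlety is the interpretation of the distinctness condition $h_n\neq h_m$ for $n\neq m$ as distinctness of analytic branches rather than pointwise distinctness, since self-adjoint analytic families may exhibit isolated eigenvalue crossings that must not destroy the global labeling on $(-1,1)$.
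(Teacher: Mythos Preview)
The paper does not supply its own proof of this theorem: it is stated as a classical result and attributed to Kato's monograph~\cite{Kato_per} (``Now we recall an important result\ldots''). Your sketch is precisely the standard argument one finds there or in Reed--Simon: type-(A) holomorphy of the family, a Neumann-series resolvent identity for the compactness in~(i), Riesz projections around each isolated eigenvalue for the local analyticity in~(ii), and Rellich's theorem on finite-dimensional Hermitian analytic families for the real-analytic splitting of degenerate eigenvalues. Your two cautionary remarks---that self-adjointness is essential to avoid Puiseux branches, and that $h_n\neq h_m$ must be read as distinctness of analytic functions rather than pointwise distinctness---are exactly the places where the argument needs care, and you have identified them correctly. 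Since the paper simply invokes the theorem without proof, there is nothing further to compare.
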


We are ready to give a complete algebraic characterization of all the $V$-robust symmetries.

\begin{theorem}
\label{main}
    Let $H$ be a self-adjoint compact-resolvent operator and $V$ be symmetric and $H$-bounded. 
    Consider a symmetry $S\in\{H\}'$. Then
    \begin{enumerate}
    \item $S$ is $V$-robust if and only if 
    \begin{equation}
    	\quad\comm{S}{P_{n}(0)}=0, \quad \text{for all } n \geq 1,
    \end{equation}
    where $\{P_n(\varepsilon)\}_{n \geq 1}$ are the eigenprojections of the perturbed Hamiltonian $H+\varepsilon V$;
    \item if $V$ commutes with $H$ and is self-adjoint, then $S$ is $V$-robust if and only if $S$ and $V$ commute, i.e.\ $S$ is $V$-unbroken. 
    \end{enumerate}
    Furthermore, the set of $V$-robust symmetries
    \begin{equation}
    	\quad\mathcal{R}_{\{V\}}(H)=\{P_n(0): n \geq 1\}'
    \end{equation}
   is a von Neumann algebra with 
   \begin{equation}
   	\quad\{H\}''\subseteq  \mathcal{R}_{\{V\}}(H)\subseteq \{H\}'. 
   \end{equation}
   and, if $V=V^\dagger$,
   \begin{equation}
   	\quad\{H,V\}'\subset\mathcal{R}_{\{V\}}(H) .
   \end{equation}
    \end{theorem}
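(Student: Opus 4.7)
The plan is to establish the equivalence in (1) by two complementary arguments based on Kato's analytic perturbation theory, then specialise to commuting $V$ for (2), and finally read off the algebraic statements from the characterisation.

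For the ``if'' direction of (1), I would invoke the analyticity of $\{P_n(\varepsilon)\}$ supplied by Theorem~\ref{th:KATO}, which produces an analytic family of bounded invertible operators $U(\varepsilon)$, with $U(0)=I$, intertwining the unperturbed and perturbed spectral decompositions: $U(\varepsilon)\,P_n(0)\,U(\varepsilon)^{-1}=P_n(\varepsilon)$ for every $n$. Setting $S(\varepsilon):=U(\varepsilon)\,S\,U(\varepsilon)^{-1}$, the hypothesis $\comm{S}{P_n(0)}=0$ translates into $\comm{S(\varepsilon)}{P_n(\varepsilon)}=0$, so $S(\varepsilon)\in\{H(\varepsilon)\}'$ and hence $\rme^{\rmi t H(\varepsilon)}S(\varepsilon)\rme^{-\rmi t H(\varepsilon)}=S(\varepsilon)$. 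Decomposing
\[
\rme^{\rmi t H(\varepsilon)}\,S\,\rme^{-\rmi t H(\varepsilon)}-S=\rme^{\rmi t H(\varepsilon)}(S-S(\varepsilon))\rme^{-\rmi t H(\varepsilon)}+(S(\varepsilon)-S),
\]
unitarity of the conjugation yields the $t$-uniform operator-norm bound $\|\rme^{\rmi t H(\varepsilon)}\,S\,\rme^{-\rmi t H(\varepsilon)}-S\|\le 2\|S(\varepsilon)-S\|$, and the right-hand side vanishes as $\varepsilon\to 0$ by norm continuity of $U$.

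For the ``only if'' direction, I would apply the mean ergodic theorem to $H(\varepsilon)$, whose spectrum is pure point by Theorem~\ref{th:KATO}. The Ces\`aro averages
\[
M_T(\varepsilon)\psi:=\frac{1}{T}\int_0^T \rme^{\rmi t H(\varepsilon)}\,S\,\rme^{-\rmi t H(\varepsilon)}\psi\,dt
\]
converge, as $T\to\infty$, to $\sum_n P_n(\varepsilon)\,S\,P_n(\varepsilon)\psi$. The estimate $\|(M_T(\varepsilon)-S)\psi\|\le \sup_{t}\|(\rme^{\rmi t H(\varepsilon)}\,S\,\rme^{-\rmi t H(\varepsilon)}-S)\psi\|$ together with $V$-robustness gives $M_T(\varepsilon)\psi\to S\psi$ uniformly in $T$ as $\varepsilon\to 0$, hence $\sum_n P_n(\varepsilon)\,S\,P_n(\varepsilon)\psi\to S\psi$. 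Using $U(\varepsilon)$ I would then rewrite the sum as $U(\varepsilon)\,\mathcal{D}_0\!\left(U(\varepsilon)^{-1}SU(\varepsilon)\right)\,U(\varepsilon)^{-1}$, where the ``dephasing'' $\mathcal{D}_0(A):=\sum_n P_n(0)\,A\,P_n(0)$ is a norm contraction on $B(\mathcal{H})$. Norm continuity of $U$ forces the limit to be $\mathcal{D}_0(S)$, whence $S=\mathcal{D}_0(S)$, which is equivalent to $\comm{S}{P_n(0)}=0$ for every $n$.

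Statement (2) then follows at once: when $V=V^\dagger$ commutes with $H$, the two operators admit a joint spectral resolution and, for small $\varepsilon$, the $P_n(\varepsilon)$ coincide with the joint eigenprojections of $H$ and $V$, so $\{P_n(0)\}'=\{H,V\}'$ and $V$-robustness collapses to $\comm{S}{V}=0$, i.e.\ $V$-unbrokenness. The remaining algebraic claims are routine consequences of (1): $\mathcal{R}_{\{V\}}(H)=\{P_n(0)\}'$ is a commutant, hence a von Neumann algebra; the subprojection property $P_n(0)\le P_k$ shows that $\{P_n(0)\}$ refines $\{P_k\}$, giving $\mathcal{R}_{\{V\}}(H)\subseteq\{H\}'$; bounded Borel functions of $H$ act as scalars on each $\mathrm{Range}(P_n(0))$, giving $\{H\}''\subseteq\mathcal{R}_{\{V\}}(H)$; and if $V=V^\dagger$, any $S$ commuting with both $H$ and $V$ commutes with $H(\varepsilon)$ and therefore with all its spectral projections $P_n(\varepsilon)$, yielding $\{H,V\}'\subset\mathcal{R}_{\{V\}}(H)$. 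The main obstacle I anticipate is the infinite-dimensional bookkeeping: extracting the norm-analytic similarity $U(\varepsilon)$ from Kato's framework under mere $H$-boundedness of $V$, and justifying the interchange of the $T\to\infty$ and $\varepsilon\to 0$ limits via the uniform-in-$T$ control built into the definition of $V$-robustness.
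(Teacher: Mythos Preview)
Your ``only if'' argument via the mean ergodic theorem is a genuinely different and rather elegant alternative to the paper's approach. The paper argues by contraposition: if $\comm{S}{P_n(0)}\neq 0$, it exhibits explicit unit vectors $\psi_n\in P_n(0)\mathcal H$, $\psi_m\in P_m(0)\mathcal H$ with $\braket{\psi_m|S\psi_n}\neq 0$ and bounds the divergence from below by $2|\sin(t(h_m(\varepsilon)-h_n(\varepsilon))/2)|\,|\braket{\psi_m|S\psi_n}|$, whose supremum over $t$ is bounded away from zero. Your averaging route avoids this case analysis and reaches $S=\mathcal D_0(S)$ in one stroke; note, however, that the passage from $\sum_n P_n(\varepsilon)SP_n(\varepsilon)\psi$ to $\mathcal D_0(S)\psi$ only needs \emph{strong} continuity of $U(\varepsilon)$ (using that $\mathcal D_0$ is a contraction, hence strongly continuous on bounded sets), not the norm continuity you invoke.

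Your ``if'' direction, by contrast, has a genuine gap. You deduce the $t$-uniform operator-norm bound $\|\rme^{\rmi tH(\varepsilon)}S\rme^{-\rmi tH(\varepsilon)}-S\|\le 2\|S(\varepsilon)-S\|$ and then assert $\|S(\varepsilon)-S\|\to 0$ ``by norm continuity of $U$''. But a single intertwining unitary $U(\varepsilon)$ for \emph{all} eigenprojections simultaneously is in general only \emph{strongly} continuous, not norm-continuous. The paper's own Example~5.1 makes this concrete: for $H=\tfrac12(\hat p^2+\hat x^2)$ and $V=\hat p$ one has $U(\varepsilon)=\rme^{\rmi\varepsilon\hat x}$, which satisfies $\|U(\varepsilon)-I\|=2$ for every $\varepsilon\neq 0$; the (robust) symmetry $S=\tfrac12(1-\Pi)$ then obeys $\|S(\varepsilon)-S\|=1$ for all $\varepsilon\neq 0$, and indeed $\sup_t\|\rme^{\rmi tH(\varepsilon)}S\rme^{-\rmi tH(\varepsilon)}-S\|\ge 1/\sqrt2$. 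So your conclusion of norm convergence is false in general, and the obstacle you flag is not mere bookkeeping but a real failure of the premise. The paper circumvents this by passing to the eternal block-diagonal approximation $\tilde H(\varepsilon)=\sum_n h_n(\varepsilon)P_n(0)$: since its eigenprojections are the \emph{fixed} $P_n(0)$, the orbit $\rme^{-\rmi t\tilde H(\varepsilon)}\psi$ lives in a set that can be truncated uniformly in $t$ via $\psi=\psi_{\le N}+\psi_{>N}$, which upgrades strong convergence of $U(\varepsilon)\to I$ to convergence uniform in $t$ (Lemma~\ref{lemma:123}). Your decomposition can be salvaged along the same lines, but it requires exactly this pure-point compactness argument, not norm-analyticity of $U$.

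Your derivation of (ii) from (i) and the algebraic corollaries is fine; the paper instead proves (ii) directly via $\rme^{\rmi t(H+\varepsilon V)}=\rme^{\rmi tH}\rme^{\rmi t\varepsilon V}$ and a rescaling $\tau=\varepsilon t$, which is slightly quicker but equivalent.
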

\begin{remark}
In the case under consideration of a pure-point Hamiltonian $H$ the bicommutant $\{H\}''$ in~\eqref{eq:H''def} reduces to the von Neumann algebra generated by the eigenprojections of $H$, namely,
\begin{equation}
	\{H\}'' = \{P_k : k\geq 1\}''
	= \Bigl\{\sum_{k \geq 1} a_{k} P_{k}  \,:\, (a_{k})_{k \geq 1} \in \ell^{\infty}\Bigr\} ,
\label{eq:bicommutantCR}
\end{equation}
where the convergence of the series is in the strong topology.

The symmetries of the Hamiltonian $H$ are the bounded operators which commute with the Hamiltonian $H$, or equivalently with  (the Abelian von Neumann algebra  $\{H\}''$ generated by) the family of its eigenprojections $\{P_k\}_{k \geq 1}$. On the other hand, the $V$-robust symmetries are the ones which commute with (the larger Abelian von Neumann algebra generated by) the subprojections $\{P_n(0)\}_{n \geq 1}$ of $\{P_k\}_{k \geq 1}$ induced by~$V$.

The robust symmetries form a von Neumann algebra: thus the product and the linear combination of robust symmetries is still robust and so is the (strong) limit of robust symmetries.
\end{remark}

Theorem~\ref{main} is a corollary, by Kato's theorem~\ref{th:KATO}, of the following more general result on the stability of symmetries against continuous self-adjoint deformations of $H$.

\begin{theorem} \label{thm:main0}
Let $H$ be a pure-point self-adjoint operator.
	Let $I$ be a real neighbourhood of $0$. Let $\varepsilon\mapsto H(\varepsilon) $ be a continuous deformation of $H$, where $H(\varepsilon)$ is self-adjoint on $D(H(\varepsilon))=D(H)$ for all $\varepsilon\in I$,  and $H(0)=H$. Assume that
	\begin{enumerate}
		\item $H(\varepsilon) \psi = \sum_{n\geq 1}h_n(\varepsilon) P_n(\varepsilon) \psi$, for all $\psi\in D(H)$ and all $\varepsilon\in I$;
		\item $\varepsilon \mapsto h_n(\varepsilon)$ are continuous real-valued functions, with $h_n(\varepsilon)\neq h_m(\varepsilon)$ for $n\neq m$ and all $\varepsilon \in I\setminus\{0\}$;
		\item $\{P_n(\varepsilon)\}_{n\geq 1}$ is a complete orthogonal family of projections, with  $P_n(\varepsilon)= U(\varepsilon)P_n(0)U(\varepsilon)^\dag$ for all $\varepsilon\in I$, where $\varepsilon\mapsto U(\varepsilon)$ is strongly continuous and $U(\varepsilon)$ is unitary.
	\end{enumerate}
	
	Let $S$ be a symmetry of $H$, i.e.\ $S\in\{H\}'$. Then, 
	\begin{equation}
    \sup_{t\in\mathbb{R}}\norm{\left(\rme^{{\rmi}t H(\varepsilon)}S\rme^{-{\rmi}tH(\varepsilon)}-S\right)\psi} \to0, \quad \text{as } \varepsilon \to 0, 
    \label{eq:robustS}
    \end{equation} 
for all $\psi\in\mathcal{H}$, if and only if
\begin{equation}
	S\in\{P_n(0): n\geq 1\}' .
	\label{eq:SP_n(0)}
\end{equation}

When~\eqref{eq:robustS} holds, we say that the symmetry $S$ is \emph{robust against the deformation $H(\varepsilon)$}.

	\end{theorem}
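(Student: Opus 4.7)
\emph{Proof plan.} The plan is to exploit the unitary conjugation in hypothesis (iii). Writing $\tilde H(\varepsilon) := \sum_n h_n(\varepsilon) P_n(0)$, which is diagonal in the fixed orthogonal family $\{P_n(0)\}$, one has $H(\varepsilon) = U(\varepsilon) \tilde H(\varepsilon) U(\varepsilon)^\dag$. Setting $\tilde S(\varepsilon) := U(\varepsilon)^\dag S U(\varepsilon)$ and $\phi_\varepsilon := U(\varepsilon)^\dag \psi$, unitarity gives
\begin{equation*}
\|(\rme^{\rmi t H(\varepsilon)} S \rme^{-\rmi t H(\varepsilon)} - S) \psi\| = \|(\rme^{\rmi t \tilde H(\varepsilon)} \tilde S(\varepsilon) \rme^{-\rmi t \tilde H(\varepsilon)} - \tilde S(\varepsilon)) \phi_\varepsilon\|,
\end{equation*}
so it suffices to study the right-hand side, noting that $\phi_\varepsilon \to \psi$ strongly as $\varepsilon \to 0$ and $\tilde S(\varepsilon) \to S$ strongly.

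For necessity I would use time-averaging. Because $h_n(\varepsilon) \neq h_m(\varepsilon)$ for $n \neq m$ whenever $\varepsilon \neq 0$, von Neumann's mean ergodic theorem yields the in-norm convergence
\begin{equation*}
\frac{1}{T}\int_0^T \rme^{\rmi t H(\varepsilon)} S \rme^{-\rmi t H(\varepsilon)} \psi \, \rmd t \longrightarrow \sum_n P_n(\varepsilon) S P_n(\varepsilon) \psi \quad (T \to \infty).
\end{equation*}
The Ces\`aro limit is dominated in norm by the supremum in \eqref{eq:robustS}, so if $S$ is robust the diagonal part converges to $S \psi$ as $\varepsilon \to 0$. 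A short dominated-convergence estimate, based on $U(\varepsilon)^\dag S U(\varepsilon) \to S$ strongly and $\sum_n \|P_n(0) \psi\|^2 = \|\psi\|^2$, then shows $\sum_n P_n(\varepsilon) S P_n(\varepsilon) \psi \to \sum_n P_n(0) S P_n(0) \psi$, forcing $S \psi = \sum_n P_n(0) S P_n(0) \psi$ for all $\psi$, i.e.\ $S$ commutes with every $P_n(0)$.

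For sufficiency, the hypothesis $[S, P_n(0)] = 0$ makes $S$ commute with $\tilde H(\varepsilon)$, so a direct expansion reduces the bracketed operator to $C(t,\varepsilon) := \rme^{\rmi t \tilde H(\varepsilon)} \Delta(\varepsilon) \rme^{-\rmi t \tilde H(\varepsilon)} - \Delta(\varepsilon)$ with $\Delta(\varepsilon) := \tilde S(\varepsilon) - S \to 0$ strongly. I would then truncate via $\Pi_N := \sum_{n=1}^N P_n(0)$ and split $\phi_\varepsilon = \Pi_N \phi_\varepsilon + (I - \Pi_N)\phi_\varepsilon$. Expanding $C(t,\varepsilon)\Pi_N \phi_\varepsilon$ in the $\{P_n(0)\}$ basis and applying Cauchy-Schwarz on the at most $N$ non-vanishing $m$-indices gives the $t$-uniform head bound
\begin{equation*}
\sup_{t \in \R} \|C(t,\varepsilon)\, \Pi_N \phi_\varepsilon\|^2 \leq 4 N \sum_{m=1}^N \|\Delta(\varepsilon) P_m(0) \phi_\varepsilon\|^2,
\end{equation*}
which tends to $0$ as $\varepsilon \to 0$ for any fixed $N$. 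The tail is controlled uniformly in $t$ by $\|C(t,\varepsilon)\| \leq 2\|\Delta(\varepsilon)\| \leq 4\|S\|$ together with $\|(I-\Pi_N) \phi_\varepsilon\| \to \|(I-\Pi_N) \psi\|$, which can be made arbitrarily small by choosing $N$ large. A standard two-step $\epsilon/2$ argument (first $N$, then $\varepsilon$) closes the estimate.

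The main obstacle is that $U(\varepsilon) \to I$, and hence $\Delta(\varepsilon) \to 0$, only in the strong topology, so uniform control along the orbit $\{\rme^{-\rmi t \tilde H(\varepsilon)} \phi_\varepsilon : t \in \R\}$ - which in infinite dimensions is generally not precompact - cannot be extracted from any operator-norm bound. The truncation is specifically designed to bypass this pathology: it converts the time-dependent off-diagonal sum into finitely many strongly vanishing terms, after which uniform boundedness of $\|C(t,\varepsilon)\|$ suffices to dispose of the tail.
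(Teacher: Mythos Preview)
Your proof is correct and takes a genuinely different route from the paper's in both directions, though it rests on the same two pillars: the auxiliary operator $\tilde H(\varepsilon)=\sum_n h_n(\varepsilon)P_n(0)$ and a $\Pi_N$-truncation to upgrade strong convergence of $U(\varepsilon)$ to uniform-in-$t$ control.

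For \emph{sufficiency}, the paper splits
\[
\rme^{\rmi tH(\varepsilon)}S\rme^{-\rmi tH(\varepsilon)}-S=A(t,\varepsilon)+B(t,\varepsilon),
\]
with $A=\rme^{\rmi tH(\varepsilon)}[S,\rme^{-\rmi t\tilde H(\varepsilon)}]$ and $B=\rme^{\rmi tH(\varepsilon)}[S,\rme^{-\rmi tH(\varepsilon)}-\rme^{-\rmi t\tilde H(\varepsilon)}]$; a separate lemma (the ``eternal block-diagonal approximation'') shows $\sup_t\|B(t,\varepsilon)\psi\|\to0$ via $\Pi_N$-truncation, and the hypothesis $[S,P_n(0)]=0$ kills $A$. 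You instead conjugate by $U(\varepsilon)$ to work entirely in the $\tilde H(\varepsilon)$-picture, so the whole expression becomes $C(t,\varepsilon)\phi_\varepsilon$ with $\Delta(\varepsilon)=U(\varepsilon)^\dag SU(\varepsilon)-S\to0$ strongly; the truncation then acts on $\phi_\varepsilon$ directly. Your route is slightly more economical (one estimate instead of a lemma plus a corollary), while the paper's $A+B$ split has the advantage of isolating a \emph{reusable} approximation lemma and making the ``fragile component'' $A$ explicit for the converse direction.

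For \emph{necessity}, the paper argues by contrapositive: if $P_m(0)SP_n(0)\neq0$ for some $m\neq n$, pick unit vectors $\psi_n\in P_n(0)\mathcal H$, $\psi_m\in P_m(0)\mathcal H$ with $\langle\psi_m,S\psi_n\rangle\neq0$ and lower-bound $\|A(t,\varepsilon)\psi_n\|$ by $2|\sin(t(h_m(\varepsilon)-h_n(\varepsilon))/2)|\,|\langle\psi_m,S\psi_n\rangle|$, whose supremum in $t$ is $2|\langle\psi_m,S\psi_n\rangle|>0$ for every $\varepsilon\neq0$. Your time-averaging argument is more conceptual and avoids choosing explicit vectors; note, however, that the Ces\`aro convergence $\frac1T\int_0^T\rme^{\rmi tH(\varepsilon)}S\rme^{-\rmi tH(\varepsilon)}\psi\,\rmd t\to\sum_nP_n(\varepsilon)SP_n(\varepsilon)\psi$ is not a direct instance of von Neumann's theorem (the conjugation is not a unitary flow on $\mathcal H$) and itself requires a $\Pi_N$-truncation of $\psi$ together with von Neumann applied to each $\rme^{-\rmi t(H(\varepsilon)-h_n(\varepsilon))}$. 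The paper's explicit lower bound buys something extra: it shows that fragility is detected already by the $\liminf_{\varepsilon\to0}$, not just the $\limsup$, a remark the authors make separately.
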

\begin{remark}
Here we consider deformations $H(\varepsilon)$ of a pure-point Hamiltonian $H$ (not assumed to have a compact resolvent), which are not necessarily of the form~\eqref{eqnHe}, whence $H(\varepsilon)$ are no longer guaranteed to have a compact resolvent. However,  their spectrum is still required to be pure point, but the eigenvalues can have finite accumulation points and infinite degeneracy. Finally, the eigenvalues and the eigenprojections are only required to be continuous. 
\end{remark}

The proofs of Theorems~\ref{main} and~\ref{thm:main0} are postponed to Subsection~\ref{susect:proof123}, after proving two preliminary lemmas.  The first one deals with the construction of an eternal block-diagonal approximation of the  deformed Hamiltonian $H(\varepsilon)$ in Theorem~\ref{thm:main0}, which generates a unitary group commuting with the one generated by $H$ (and then block-diagonal), and approximating the group generated by $H(\varepsilon)$ uniformly in time (eternally). The second lemma regards the  splitting of the perturbed evolution of a symmetry in two parts: a fragile component and a robust one.

\subsection{Eternal Block-Diagonal Approximation}
Consider  a continuous deformation $\varepsilon\mapsto H(\varepsilon)$ of a pure-point self-adjoint operator $H$,  as in Theorem~\ref{thm:main0}. For all $ \varepsilon \in I$ define the self-adjoint operator 
    \begin{equation}\label{eqn:Htildee}
        \tilde{H}(\varepsilon):=
        U(\varepsilon)^\dagger H(\varepsilon) U(\varepsilon) =
        \sum_{n\ge1}h_{n}(\varepsilon)P_{n}(0), 
    \end{equation}
    on the domain $U(\varepsilon)^\dagger D(H)$,
 and consider the unitary group $t\mapsto \rme^{-\rmi t \tilde{H}(\varepsilon)}$ it generates. We now show that this group is an approximation of $t\mapsto \rme^{-\rmi t H(\varepsilon)}$ uniform in time.
  
\begin{lemma}\label{lemma:123}
Let $\varepsilon\mapsto H(\varepsilon)$ be a continuous deformation of a pure-point self-adjoint operator $H$, as in Theorem~\ref{thm:main0}. Then, 
 for all $\psi\in\mathcal{H}$,
    \begin{equation}
\sup_{t \in \mathbb{R}}\norm{\left(\rme^{-{\rmi}t{H}(\varepsilon)}-\rme^{-{\rmi}t\tilde{H}(\varepsilon)}\right)\psi}\to 0, \quad \text{as } \varepsilon\to0.
    \label{eq:upsi}
    \end{equation}
\end{lemma}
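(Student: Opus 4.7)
The plan is to exploit the shared spectrum of $H(\varepsilon)$ and $\tilde H(\varepsilon)=\sum_n h_n(\varepsilon)P_n(0)$: their spectral decompositions have the same eigenvalues $\{h_n(\varepsilon)\}_n$, so
\[
    (\rme^{-{\rmi}tH(\varepsilon)}-\rme^{-{\rmi}t\tilde H(\varepsilon)})\psi=\sum_{n\ge 1}\rme^{-{\rmi}th_n(\varepsilon)}\bigl(P_n(\varepsilon)-P_n(0)\bigr)\psi .
\]
I would split this series at a cutoff $N$, into a finite head ($n\le N$) and a tail ($n>N$), and bound each piece uniformly in $t$.

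The head is a finite sum dominated, by the triangle inequality, by $\sum_{n\le N}\|(P_n(\varepsilon)-P_n(0))\psi\|$, which is manifestly $t$-independent. The identity $P_n(0)=U(0)P_n(0)U(0)^\dag$ shows that $U(0)$ commutes with every $P_n(0)$; strong continuity of $\varepsilon\mapsto U(\varepsilon)$ (and of $\varepsilon\mapsto U(\varepsilon)^\dag$, since these are unitaries) then gives $P_n(\varepsilon)\psi=U(\varepsilon)P_n(0)U(\varepsilon)^\dag\psi\to P_n(0)\psi$ for each fixed $n$, so for any prescribed $N$ the head can be made arbitrarily small by taking $\varepsilon$ small.

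The tail is handled via the observation that $\sum_{n>N}\rme^{-{\rmi}th_n(\varepsilon)}P_n(\varepsilon)=\rme^{-{\rmi}tH(\varepsilon)}Q_N(\varepsilon)$, with $Q_N(\varepsilon):=\sum_{n>N}P_n(\varepsilon)$, and analogously for $\tilde H(\varepsilon)$; unitarity of the two evolution groups then yields the $t$-independent estimate
\[
    \Bigl\|\sum_{n>N}\rme^{-{\rmi}th_n(\varepsilon)}\bigl(P_n(\varepsilon)-P_n(0)\bigr)\psi\Bigr\|\le\|Q_N(\varepsilon)\psi\|+\|Q_N(0)\psi\| .
\]
Completeness of $\{P_n(0)\}_{n\ge 1}$ makes $\|Q_N(0)\psi\|\to 0$ as $N\to\infty$, and the identity $Q_N(\varepsilon)=U(\varepsilon)Q_N(0)U(\varepsilon)^\dag$ combined with $[U(0),Q_N(0)]=0$ (which follows from $[U(0),P_n(0)]=0$ for every $n$) gives $\|Q_N(\varepsilon)\psi\|\le\|Q_N(0)\psi\|+\|(U(\varepsilon)^\dag-U(0)^\dag)\psi\|$. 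A standard three-$\eta$ argument then concludes: first choose $N$ so that $\|Q_N(0)\psi\|$ is small, then let $\varepsilon\to 0$ to kill both the head and the $U$-continuity remainder.

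The main obstacle is precisely the uniformity in $t$. The naive approach — writing $\rme^{-{\rmi}tH(\varepsilon)}=U(\varepsilon)\rme^{-{\rmi}t\tilde H(\varepsilon)}U(\varepsilon)^\dag$ and peeling off $U(\varepsilon)-U(0)$ — fails, because the intermediate vector $\rme^{-{\rmi}t\tilde H(\varepsilon)}U(\varepsilon)^\dag\psi$ rotates through $\mathcal H$ as $t$ varies and $U(\varepsilon)-U(0)$ is only small in the strong topology, not in norm. The tail-projection trick above circumvents this because, once the tail is repackaged as $\rme^{-{\rmi}tH(\varepsilon)}Q_N(\varepsilon)\psi$, unitarity collapses the $t$-dependence entirely and reduces the estimate to one on the norms of the projected vectors $Q_N(\varepsilon)\psi$.
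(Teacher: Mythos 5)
Your argument is correct and rests on the same ingredients as the paper's proof: a cutoff $N$ in the decomposition over $\{P_n(0)\}_{n\geq 1}$, the fact that the propagators act by unimodular phases on each eigenspace (which removes the $t$-dependence of the head), strong continuity of $\varepsilon\mapsto U(\varepsilon)$ for the finitely many head terms, and completeness of the projection family for the tail; only the bookkeeping differs, since the paper first writes the difference as the commutator $[\rme^{-{\rmi}t\tilde H(\varepsilon)},U(\varepsilon)^\dagger-\mathbb{I}]$ and then truncates the evolved vector $\rme^{-{\rmi}t\tilde H(\varepsilon)}\psi$, whereas you expand the difference of the two propagators directly in the eigenprojections. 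One small correction to your closing remark: the conjugation route you call ``naive'' does not fail --- it is exactly the paper's route, and it goes through because on $\psi_{\leq N}=\sum_{n\leq N}P_n(0)\psi$ the evolution $\rme^{-{\rmi}t\tilde H(\varepsilon)}$ is a finite sum of phases, so its orbit does not sweep through all of $\mathcal{H}$ and the strong smallness of $U(\varepsilon)-\mathbb{I}$ suffices.
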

\begin{proof} 
Let $\psi\in\mathcal{H}$ and $t \in \mathbb{R}$, we have that
\begin{eqnarray}
        \norm{(\rme^{-{\rmi}tH(\varepsilon)}-\rme^{-{\rmi}t\tilde{H}		(\varepsilon)})\psi} &= &
        \norm{(U(\varepsilon)\rme^{-{\rmi}t\tilde{H}(\varepsilon)}U(\varepsilon)^\dagger-\rme^{-{\rmi}t\tilde{H}(\varepsilon)})\psi}\nonumber \\
        &= &\norm{\comm{\rme^{-{\rmi}t\tilde{H}(\varepsilon)}}{U(\varepsilon)^\dagger-\mathbb{I}}\psi}\nonumber \\ 
        &\leq & \norm{(U(\varepsilon)-\mathbb{I})\psi}+\norm{(U(\varepsilon)-\mathbb{I})\rme^{-{\rmi}t\tilde{H}(\varepsilon)}\psi}.
\label{eq:strong}
    \end{eqnarray}
By the property $(iii)$ in Theorem~\ref{thm:main0} we have that
   $\norm{(U(\varepsilon)-\mathbb{I})\psi}\to 0$, as $\varepsilon\to 0$.
        Let us consider the second term of~\eqref{eq:strong} and the resolution of the identity applied to $\psi$:
    \begin{equation*}
        \psi=\sum_{n\ge1}P_n(0)\psi.
    \end{equation*}
For all $N\in\mathbb{N^*}$,  we can write
    \begin{equation}
        \psi=\psi_{\leq{N}}+\psi_{>N}\label{eq:decN},
    \end{equation}
    where
\begin{equation}
    \psi_{\leq{N}}=\sum_{n=1}^NP_n(0)\psi, \qquad 
    \psi_{>N}=\sum_{n=N+1}^{+ \infty}P_n(0)\psi,
\end{equation}
and
\begin{equation} \label{eqn:psiN}
    \norm{\psi_{>N}}\rightarrow0 ,\quad \text{as }{N\rightarrow\infty}.
\end{equation}
By plugging~\eqref{eq:decN} into the second term of~\eqref{eq:strong}, we get:
\begin{eqnarray*}
    \norm{(U(\varepsilon)-\mathbb{I})\rme^{-{\rmi}t\tilde{H}(\varepsilon)}\psi} &=\norm{(U(\varepsilon)-\mathbb{I})\rme^{-{\rmi}t\tilde{H}(\varepsilon)}(\psi_{\leq{N}}+\psi_{>N})}\\&\leq
    \norm{(U(\varepsilon)-\mathbb{I})\rme^{-{\rmi}t\tilde{H}(\varepsilon)}\psi_{\leq{N}}} +2\norm{\psi_{>N}}     \\&\leq
        \norm{(U(\varepsilon)-\mathbb{I})\rme^{-{\rmi}t\tilde{H}(\varepsilon)}\sum_{n=1}^{N}P_n(0)\psi}  +2\norm{\psi_{>N}} \\&=\norm{\sum_{n=1}^{N}\rme^{-{\rmi}th_n(\varepsilon)}(U(\varepsilon)-\mathbb{I})P_n(0)\psi} +2\norm{\psi_{>N}}
\end{eqnarray*}
Let $\eta>0$, then by (\ref{eqn:psiN}) there is $N_0 >0$  such that 
\begin{equation*}
    \norm{\psi_{>N_0}}<\frac{\eta}{6},
\end{equation*}
Moreover, by the strong continuity of $\varepsilon\mapsto U(\varepsilon)$, there is $\varepsilon_0>0$ such that for all $\varepsilon \in I$, with $\abs{\varepsilon}<\varepsilon_0$, and for all $n=1,\dots,N_0$:
\begin{equation*}
    \norm{(U(\varepsilon)-\mathbb{I})P_n(0)\psi}<\frac{\eta}{3N_0}, 
\end{equation*}
and 
\begin{equation*}
 \norm{(U(\varepsilon)-\mathbb{I})\psi}<\frac{\eta}{3}.
\end{equation*}
Then for all $|\varepsilon| < \varepsilon_0$:
\begin{eqnarray*}    
        \norm{(\rme^{-{\rmi}tH(\varepsilon)}-\rme^{-{\rmi}t\tilde{H}		(\varepsilon)})\psi} &\leq &
        \norm{(U(\varepsilon)-\mathbb{I})\psi}+\norm{(U(\varepsilon)-\mathbb{I})\rme^{-{\rmi}t\tilde{H}(\varepsilon)}\psi} \\
        &< & \frac{\eta}{3} + \frac{\eta}{3}+\frac{\eta}{3} = \eta.
            \end{eqnarray*}
    This proves the limit~\eqref{eq:upsi}.
 \end{proof}
\begin{remark}
The proof of  
Lemma~\ref{lemma:123} is based on the property 
\begin{equation}
    \sup_{t\in \mathbb{R}}\norm{\left(U(\varepsilon)-\mathbb{I}\right)\rme^{-{\rmi}t\tilde{H}(\varepsilon)}\psi}\to0, \quad \text{as } \varepsilon\to 0, \qquad \forall \psi \in \mathcal{H}.
    \label{eq:upsi1}
\end{equation}
    For this property to hold the assumption of  pure-point spectrum of $H(\varepsilon)$, and thus of $\tilde{H}(\varepsilon)$, is crucial. Indeed, if the Hamiltonian has a pure-point spectrum, the orbit of a generic $\psi\in\mathcal{H}$ has a compact closure (see e.g.~\cite{oliv}). In such a case a vector, during its evolution, spends most of the time, apart a small error uniform in time, in a finite-dimensional manifold. Then the operation of taking the supremum over time does not span the full Hilbert space, but just a finite-dimensional submanifold, where~\eqref{eq:upsi1} is trivially satisfied because of the strong continuity of $\{U(\varepsilon)\}_{\varepsilon\in I}$.
     On the other hand, when the spectrum is not pure point such property is not in general satisfied.
\end{remark}
\subsection{Fragile and robust components of a symmetry}
In the following lemma we split the evolution of a symmetry with respect to the perturbed dynamics in two parts: a fragile component and a robust one.
\begin{lemma}
\label{l:evolution} 
Let $H$ and $H(\varepsilon)$ be as in Theorem~\ref{thm:main0} and $\tilde{H}(\varepsilon)$ be as in~\eqref{eqn:Htildee}. Let $S \in \{H\}'$ be a symmetry of the Hamiltonian $H$. Define for all $\varepsilon \in I$ and $t \in \mathbb{R}$    
\begin{eqnarray}
 A(t,\varepsilon):=\rme^{{\rmi}t{H}(\varepsilon)}\comm{S}{\rme^{-{\rmi}t\tilde{H}(\varepsilon)}},
 \label{eq:Arob}\\
 B(t,\varepsilon):=\rme^{{\rmi}t{H}(\varepsilon)}\comm{S}{\rme^{-{\rmi}tH(\varepsilon)}-\rme^{-{\rmi}t\tilde{H}(\varepsilon)}},
 \label{eq:Bfrag}
\end{eqnarray}
so that
\begin{equation}
\rme^{{\rmi}tH(\varepsilon)}S\rme^{-{\rmi}tH(\varepsilon)}-S=A(t,\varepsilon)+B(t,\varepsilon).
 \end{equation}
Then for all $\psi\in\mathcal{H}$,
\begin{equation}\label{eq:robust}
\sup_{t \in \mathbb{R}}\, \norm{B(t,\varepsilon)\psi}\to 0, \quad \text{as } \varepsilon\to0.
\end{equation}
\end{lemma}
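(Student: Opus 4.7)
The plan is to reduce the statement directly to Lemma~\ref{lemma:123} by exploiting unitarity and the triangle inequality, without any further spectral analysis.

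First I would expand the commutator defining $B(t,\varepsilon)$ and use the fact that $\rme^{{\rmi}tH(\varepsilon)}$ is unitary to cancel the outer factor in norm. Setting $D(t,\varepsilon):=\rme^{-{\rmi}tH(\varepsilon)}-\rme^{-{\rmi}t\tilde{H}(\varepsilon)}$, this gives, for every $\psi\in\mathcal{H}$,
\begin{equation}
\norm{B(t,\varepsilon)\psi}
=\norm{\comm{S}{D(t,\varepsilon)}\psi}
\leq \norm{S}\,\norm{D(t,\varepsilon)\psi}+\norm{D(t,\varepsilon)\,S\psi}.
\end{equation}
The bound $\|S\|<\infty$ is available since $S\in\{H\}'\subset B(\mathcal{H})$.

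Next I would take the supremum over $t\in\R$ of both sides. Since the right-hand side is a sum of two non-negative terms, the supremum is bounded by the sum of the suprema:
\begin{equation}
\sup_{t\in\R}\norm{B(t,\varepsilon)\psi}
\leq \norm{S}\sup_{t\in\R}\norm{D(t,\varepsilon)\psi}
+\sup_{t\in\R}\norm{D(t,\varepsilon)\,S\psi}.
\end{equation}
At this point Lemma~\ref{lemma:123} applies twice: once to the vector $\psi\in\mathcal{H}$ and once to the vector $S\psi\in\mathcal{H}$. It yields that each of the two suprema on the right tends to $0$ as $\varepsilon\to 0$, and therefore so does the left-hand side, proving~\eqref{eq:robust}.

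There is essentially no obstacle here: the whole point of constructing $\tilde{H}(\varepsilon)$ via Lemma~\ref{lemma:123} was precisely to make the eternal (time-uniform) strong-operator approximation $\rme^{-{\rmi}t H(\varepsilon)}\approx \rme^{-{\rmi}t\tilde{H}(\varepsilon)}$ available for any fixed vector. The only verification needed is the algebraic identity $\rme^{{\rmi}tH(\varepsilon)}S\rme^{-{\rmi}tH(\varepsilon)}-S=\rme^{{\rmi}tH(\varepsilon)}[S,\rme^{-{\rmi}tH(\varepsilon)}]$, after which adding and subtracting $\rme^{-{\rmi}t\tilde{H}(\varepsilon)}$ inside the commutator produces exactly the splitting $A(t,\varepsilon)+B(t,\varepsilon)$. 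All the genuine analytic work — the use of pure-point spectrum of $H$ to truncate $\psi$ to a finite-dimensional subspace and obtain uniformity in $t$ — has already been absorbed into Lemma~\ref{lemma:123}.
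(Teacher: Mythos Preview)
Your proposal is correct and matches the paper's own proof essentially line by line: the paper also uses unitarity of $\rme^{{\rmi}tH(\varepsilon)}$ to drop the outer factor, expands the commutator to bound $\norm{B(t,\varepsilon)\psi}$ by $\norm{S}\norm{D(t,\varepsilon)\psi}+\norm{D(t,\varepsilon)S\psi}$, and then invokes Lemma~\ref{lemma:123} on $\psi$ and $S\psi$.
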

\begin{remark}
As Eq.~\eqref{eq:robust} makes clear, the operator $B(t,\varepsilon)$ describes a robust component of the perturbed evolution of $S$. Indeed, its contribution is negligible, uniformly in time, for $\varepsilon$  sufficiently small. On the other hand, as we will see,
$A(t,\varepsilon)$ is responsible for a possible non-negligible divergence of the symmetry from its initial value. For this reason it represents a fragile component of the evolution of $S$.
\end{remark}
 Let us prove Lemma~\ref{l:evolution}.
 \begin{proof}
 Let $\varepsilon \in I$, $t \in \mathbb{R}$, and $\psi\in\mathcal{H}$, then
\begin{eqnarray*}
    \rme^{{\rmi}tH(\varepsilon)}S\rme^{-{\rmi}tH(\varepsilon)}-S&=\rme^{{\rmi}tH(\varepsilon)}\comm{S}{\rme^{-{\rmi}tH(\varepsilon)}}\\
    &=\rme^{{\rmi}tH(\varepsilon)}\comm{S}{\rme^{-{\rmi}t\tilde{H}(\varepsilon)}}+\rme^{{\rmi}tH(\varepsilon)}\comm{S}{\rme^{-{\rmi}tH(\varepsilon)}-\rme^{-{\rmi}t\tilde{H}(\varepsilon)}}\\&=
    A(t,\varepsilon)+B(t,\varepsilon).
\end{eqnarray*}
Moreover,
 \begin{eqnarray*}
        \norm{B(t,\varepsilon)\psi}&=\norm{\rme^{{\rmi}tH(\varepsilon)}\comm{S}{\rme^{-{\rmi}tH(\varepsilon)}-\rme^{-{\rmi}t\tilde{H}(\varepsilon)}}\psi}\\&=
        \norm{\comm{S}{\rme^{-{\rmi}tH(\varepsilon)}-\rme^{-{\rmi}t\tilde{H}(\varepsilon)}}\psi}\\&
        \leq
        \norm{S\left(\rme^{-{\rmi}t{H}(\varepsilon)}-\rme^{-{\rmi}t\tilde{H}(\varepsilon)}\right)\psi} +
        \norm{\left(\rme^{-{\rmi}t{H}(\varepsilon)}-\rme^{-{\rmi}t\tilde{H}(\varepsilon)}\right)S\psi}\\&\le
        \norm{S}\norm{\left(\rme^{-{\rmi}t{H}(\varepsilon)}-\rme^{-{\rmi}t\tilde{H}(\varepsilon)}\right)\psi}+
        \norm{\left(\rme^{-{\rmi}t{H}(\varepsilon)}-\rme^{-{\rmi}t\tilde{H}(\varepsilon)}\right)S\psi}.
    \end{eqnarray*}
Thus, by using Lemma~\ref{lemma:123},  we get the thesis~\eqref{eq:robust}.
\end{proof}

\subsection{Proof of Theorems}\label{susect:proof123}

We are now ready to prove the two main theorems of this section. We start with
Theorem~\ref{thm:main0}, and then we will prove Theorem~\ref{main}.

\begin{proof}[Proof of Theorem~\ref{thm:main0}] 
We want to prove that if $S$ is robust against the deformation $H(\varepsilon)$, namely if~\eqref{eq:robustS} holds, then $S$ commutes with $P_n(0)$ for all $n \geq 1$. We prove the negation of such implication, namely we assume that there exists $n \geq 1$ such that  $\comm{S}{P_{n}(0)}\neq0$ and  show that this implies that $S$ is fragile, i.e.~\eqref{eq:robustS} does not hold.

Since $\comm{S}{P_{n}(0)}\neq0$, there exists an integer $m\neq n$ such that $P_m(0) S P_n(0)\neq 0$ or $P_n(0) S P_m(0)\neq 0$. Let us assume, for definiteness, that the first inequality holds, and thus that there exist two unit vectors $\psi_n\in{P_n(0)}\mathcal{H}$ and $\psi_{m}\in{P_{m}(0)}\mathcal{H}$, such that $$\Braket{\psi_{m}| S\psi_{n}}\neq 0.$$

Let $\varepsilon \in I$ and $t \in \mathbb{R}$. By Lemma~\ref{l:evolution} we have that
\begin{eqnarray*}
    \norm{\left(\rme^{{\rmi}tH(\varepsilon)}S\rme^{-{\rmi}tH(\varepsilon)}-S\right)\psi_n}&=\norm{\left(A(t,\varepsilon)+B(t,\varepsilon)\right)\psi_n}\\
    &\ge \left|\norm{A(t,\varepsilon)\psi_n}-\norm{B(t,\varepsilon)\psi_n} \right|
\end{eqnarray*}
with 
$$\sup_{t \in \mathbb{R}}\, \norm{B(t,\varepsilon)\psi_n}\to 0, \quad \text{as } \varepsilon\to0.$$

On the other hand, by~\eqref{eq:Arob},
\begin{eqnarray*}
	\norm{A(t,\varepsilon)\psi_n}&=\norm{\rme^{{\rmi}t{H}(\varepsilon)}\comm{S}{\rme^{-{\rmi}t\tilde{H}(\varepsilon)}}\psi_n}	=\norm{\comm{S}{\rme^{-{\rmi}t\tilde{H}(\varepsilon)}}\psi_n}\nonumber\\
	&=\norm{\left(\rme^{{\rmi}t\tilde{H}(\varepsilon)}S\rme^{-{\rmi}t\tilde{H}(\varepsilon)}-S\right)\psi_n}\nonumber\\
	&=\norm{\left(\rme^{{\rmi}t\tilde{H}(\varepsilon)}S\rme^{-{\rmi}th_n(\varepsilon)}-S\right)\psi_n} \nonumber \\
	&=\sup_{\norm{\phi}=1}\abs{\Braket{\phi | \left(\rme^{{\rmi}t\tilde{H}(\varepsilon)}S\rme^{-{\rmi}th_{n}(\varepsilon)}-S\right)\psi_{n}}},
\label{eq:norm}
\end{eqnarray*}
where we made use of $\tilde{H}(\varepsilon)\psi_n=h_n(\varepsilon)\psi_n$.
By using the properties of the supremum and the equality $\tilde{H}(\varepsilon)\psi_m=h_m(\varepsilon)\psi_m$, we get
\begin{eqnarray*}
    \norm{A(t,\varepsilon)\psi_n}&\ge\abs{\Braket{\psi_{m} |\left(\rme^{{\rmi}th_{m}(\varepsilon)}S\rme^{-{\rmi}th_{n}(\varepsilon)}-S\right)\psi_{n}}}\\
    &=\abs{\rme^{{\rmi}t(h_{m}(\varepsilon)-h_{n}(\varepsilon))}-1}\abs{\Braket{\psi_{m}| S\psi_{n}}}\\
    &=2\abs{\sin{\left(\frac{t(h_{m}(\varepsilon)-h_{n}(\varepsilon))}{2}\right)}}\abs{\Braket{\psi_{m}| S\psi_{n}}}.
\end{eqnarray*}
Now,
\begin{equation*}
\sup_{t \in \mathbb{R}} \,\abs{\sin{\left(\frac{t(h_{m}(\varepsilon)-h_{n}(\varepsilon))}{2}\right)}}=1,
\end{equation*}
for all $\varepsilon\in I\setminus\{0\}$, by property $(ii)$ in Theorem~\ref{thm:main0}. 
Therefore,
\begin{equation*}
 \liminf_{\varepsilon\rightarrow0}    \sup_{t \in \mathbb{R}} \, \norm{\left(\rme^{{\rmi}tH(\varepsilon)}S\rme^{-{\rmi}tH(\varepsilon)}-S\right)\psi_n}\ge2\abs{\Braket{\phi_{m}|S\psi_{n}}} \neq 0,
\end{equation*}
and $S$ is fragile against the deformation $H(\varepsilon)$.

Now we prove the converse implication. Suppose that $\comm{S}{P_{n}(0)}=0$ for all $n\ge1$. Then we get that $A(t,\varepsilon)=0$ for all $\varepsilon \in I$ and $t \in \mathbb{R}$. Therefore, by using Lemma~\ref{l:evolution}, we have:
\begin{equation*}
    \sup_{t \in \mathbb{R}}\,\norm{\left(\rme^{{\rmi}tH(\varepsilon)}S\rme^{-{\rmi}tH(\varepsilon)}-S\right)\psi}=
    \sup_{t \in \mathbb{R}}\,\norm{B(t,\varepsilon)\psi}\to0, \quad \text{as } \varepsilon\to0,
\end{equation*}
for all $\psi\in\mathcal{H}$, i.e. $S$ is robust against the deformation $H(\varepsilon)$. 
\end{proof}

We end this subsection by finally proving Theorem~\ref{main}.
\begin{proof}[Proof of Theorem~\ref{main}]
The theorem is a corollary of Theorem~\ref{thm:main0}. 

In order to prove $(i)$, we will prove that $H(\varepsilon)=H +\varepsilon V$ is a continuous deformation of $H$ satisfying the assumptions $(i)$, $(ii)$ and $(iii)$ of Theorem~\ref{thm:main0}. This is a consequence of Kato's perturbation Theorem~\ref{th:KATO}. 

Indeed, $(i)$ is the spectral resolution given in Theorem~\ref{th:KATO}$(i)$. 

Moreover, by Theorem~\ref{th:KATO}$(ii)$, $\varepsilon\mapsto h_n(\varepsilon)$ are analytic for $\varepsilon\in (-1,1)$ and $h_n\neq h_m$ for $n\neq m$. Therefore, they are continuous and there exists an interval $I\subseteq (-1,1)$ with $0\in I$ such that $h_n(\varepsilon)\neq h_m(\varepsilon)$ for $\varepsilon\in I\setminus\{0\}$, and $(ii)$ of Theorem~\ref{thm:main0} is satisfied. 

It remains to prove that Theorem~\ref{thm:main0}$(iii)$ holds too. 
For all $\varepsilon \in (-1,1)$ consider the operator 
\begin{equation*}
W(\varepsilon)=\sum_{n \geq 1}P_n(\varepsilon)P_n(0),
\end{equation*}
implementing the following transformations between the spectral projections $P_m(\varepsilon)$ and $P_m(0)$:  for all $m \geq 1$
        \begin{equation*}
    W(\varepsilon)P_m(0)=P_m(\varepsilon)W(\varepsilon).
            \label{eq:basis_tran}
        \end{equation*}
In general $W(\varepsilon)$ is not a unitary operator. However, following  Kato~\cite[Secs.~4.6 and~6.8]{Kato_per}, we can define the operator  
\begin{equation*}
U(\varepsilon)= \sum_{n \geq 1} P_n(\varepsilon)P_n(0) (1-R_n(\varepsilon))^{-1/2},
\label{transformation}
\end{equation*}
where
\begin{equation*}
    R_n(\varepsilon)=(P_n(\varepsilon)-P_n(0))^2.
\end{equation*}
The operator $(1-R_n(\varepsilon))^{-1/2}$ is well defined for $\norm{R_n(\varepsilon)}<1$ and then for $\varepsilon$ sufficiently small. Then, there exists a neighbourhood $I\subseteq (-1,1)$ of 0  such that for all $\varepsilon\in I$ the operator $U(\varepsilon)$ is well defined.
An easy computation shows that $U(\varepsilon)$ is unitary and
that $P_m(\varepsilon)=U(\varepsilon)P_m(0)U(\varepsilon)^{\dagger}$  for all $\varepsilon \in I$ and all $m\geq1$.
   Finally, by the  continuity of $P_n(\varepsilon)$, the family $\{U(\varepsilon)\}_{\varepsilon \in I}$ is strongly continuous and thus property~$(iii)$ of Theorem~\ref{thm:main0} is verified. 
  
  Therefore, $H(\varepsilon)=H + \varepsilon V$ is a continuous deformation of $H$ and by Theorem~\ref{thm:main0} one gets $(i)$.

Now we prove $(ii)$. Since $V$ and $H$ commute and are both self-adjoint, we have that for all $\varepsilon \in I$ and $t\in \mathbb{R}$:
    \begin{equation*}
       \rme^{itH(\varepsilon)}= \rme^{it(H+\varepsilon{V})}=\rme^{{\rmi}tH}\rme^{{\rmi}t\varepsilon{V}}=\rme^{{\rmi}t\varepsilon{V}}\rme^{{\rmi}tH}.
    \end{equation*}
Hence for all $\psi \in \mathcal{H}$
  \begin{eqnarray*}
      \sup_{t\in \mathbb{R}}\,\norm{\left(\rme^{{\rmi}tH(\varepsilon)}S\rme^{-{\rmi}tH(\varepsilon)}-S\right)\psi}&=\sup_{t\in \mathbb{R}}\,\norm{\left(\rme^{{\rmi}t\varepsilon{V}}\rme^{{\rmi}tH}S\rme^{-{\rmi}tH}\rme^{-{\rmi}t\varepsilon{V}}-S\right)\psi} \nonumber \\
      &=\sup_{t\in \mathbb{R}}\,\norm{\left(\rme^{{\rmi}t\varepsilon{V}}S\rme^{-{\rmi}t\varepsilon{V}}-S\right)\psi}\nonumber \\
      &=\sup_{\tau \in \mathbb{R}}\norm{\left(\rme^{\rmi\tau{V}}S\rme^{-\rmi\tau{V}}-S\right)\psi}\nonumber \\
      &=\sup_{\tau \in \mathbb{R}}\norm{\left[\rme^{\rmi\tau{V}}, S\right]\psi} 
    \label{eq:commute}
  \end{eqnarray*}
  which is $\varepsilon$-independent and vanishes for all $\psi$ if and only if $V$ and $S$ commute.
  
{The last assertion is a direct consequence of $(i)$ and Remark~\ref{rmk:commutant}, by noting that in this case the bicommutant of $H$ is the von Neumann algebra generated by the eigenprojections $P_n$, and that $\{H,V\}'=\hat{\mathcal{R}}_{\{V\}}$ is the set of $V$-unbroken symmetries}.
\end{proof}

\begin{remark}
According to Definition~\ref{def:Pfr} a symmetry $S \in \{H\}'$ is $V$-fragile if
\begin{equation}
	 \limsup_{\varepsilon\rightarrow0}\, \sup_{t \in \mathbb{R}} \,\norm{\left(\rme^{{\rmi}t(H+\varepsilon V)}S\rme^{-{\rmi}t(H+\varepsilon V)}-S\right)\psi} > 0,
\end{equation}
for some $\psi\in\mathcal{H}$. However, looking at  the proof of Theorem~\ref{thm:main0}, one gets that  $S$ is $V$-fragile if and only if
\begin{equation}
	 \liminf_{\varepsilon\rightarrow0}\, \sup_{t \in \mathbb{R}}\, \norm{\left(\rme^{{\rmi}t(H+\varepsilon V)}S\rme^{-{\rmi}t(H+\varepsilon V)}-S\right)\psi} > 0,
\end{equation}
for some $\psi\in\mathcal{H}$.
\end{remark}

\section{$\mathcal{P}$-robustness and complete robustness}
\label{sec:4}
In Section~\ref{sec:3}, we have studied in detail the set of the symmetries robust against a single perturbation.  In this section we want to characterize the set of the symmetries robust against an arbitrary  set $\mathcal{P}$ of symmetric $H$-bounded perturbations. 

As a consequence of Theorem~\ref{main} we have the following result:
\begin{theorem}
\label{main2}
Let $\mathcal{P}$ be a set of symmetric $H$-bounded perturbations and $S\in\{H\}'$ be a symmetry of the Hamiltonian $H$. Then, $S$ is $\mathcal{P}$-robust if and only if 
\begin{equation}
[{S},{P_{n}^{(V)}(0)}]=0, \quad  \text{for all }  n \geq 1 \text{ and } V \in \mathcal{P},
\end{equation}
where $\{P^{(V)}_n(\varepsilon)\}_{n \geq 1}$ are the eigenprojections of the perturbed Hamiltonian $H+ \varepsilon V$. 
Therefore,
\begin{equation}
\mathcal{R}_{\mathcal{P}}(H)= \bigcap_{V \in \mathcal{P}} \mathcal{R}_{\{V\}}(H),
\end{equation}
with $\mathcal{R}_{\{V\}}(H)=\{P^{(V)}_n(0) \,:\, n\geq 1\}'$,
is a von Neumann algebra of bounded operators containing the bicommutant~$\{H\}''$.
\end{theorem}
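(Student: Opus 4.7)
The plan is to reduce Theorem~\ref{main2} directly to Theorem~\ref{main}, exploiting the fact that $\mathcal{P}$-robustness is, by Definition~\ref{def:Pfr}, a pointwise statement over the perturbations $V\in\mathcal{P}$. First I would unpack the definition: $S\in\{H\}'$ is $\mathcal{P}$-robust if and only if, for every single $V\in\mathcal{P}$, the uniform-in-time convergence to zero of $(\rme^{\rmi t(H+\varepsilon V)}S\rme^{-\rmi t(H+\varepsilon V)}-S)\psi$ holds for all $\psi\in\mathcal{H}$. This is exactly the condition that $S$ is $V$-robust for each $V\in\mathcal{P}$, so $\mathcal{R}_{\mathcal{P}}(H)=\bigcap_{V\in\mathcal{P}}\mathcal{R}_{\{V\}}(H)$ is immediate from the definitions alone.

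Next I would apply Theorem~\ref{main}(i) to each individual $V\in\mathcal{P}$: it characterizes $\mathcal{R}_{\{V\}}(H)$ as the set of $S\in\{H\}'$ that commute with every eigenprojection $P_n^{(V)}(0)$, $n\geq 1$, of $H+\varepsilon V$ at $\varepsilon=0$. Intersecting over $V\in\mathcal{P}$ immediately yields the algebraic characterization
\begin{equation}
    S\in\mathcal{R}_{\mathcal{P}}(H)\iff [S,P_n^{(V)}(0)]=0 \text{ for all } n\geq 1,\ V\in\mathcal{P},
\end{equation}
and simultaneously the formula $\mathcal{R}_{\mathcal{P}}(H)=\bigcap_{V\in\mathcal{P}}\{P_n^{(V)}(0):n\geq 1\}'$.

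For the algebraic structure, I would invoke the general fact that an arbitrary intersection of von Neumann algebras acting on the same Hilbert space is a von Neumann algebra: each commutant $\{P_n^{(V)}(0):n\geq 1\}'$ is a commutant of a self-adjoint family and therefore a von Neumann algebra, and the intersection is closed under adjoint, linear combinations, products, and in the strong (hence weak) operator topology. Thus $\mathcal{R}_{\mathcal{P}}(H)$ is a von Neumann algebra. Finally, the inclusion $\{H\}''\subseteq \mathcal{R}_{\mathcal{P}}(H)$ follows from the inclusion $\{H\}''\subseteq \mathcal{R}_{\{V\}}(H)$ proved in Theorem~\ref{main}, applied to each $V\in\mathcal{P}$ and then intersected.

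There is no substantive obstacle in this argument: the work was done in Theorem~\ref{main}, and Theorem~\ref{main2} simply assembles the single-perturbation characterizations into a set-indexed one. The only subtle point worth stating cleanly is the equivalence between $\mathcal{P}$-robustness and $V$-robustness for every $V\in\mathcal{P}$, which is a tautology from Definition~\ref{def:Pfr} but should be spelled out for clarity.
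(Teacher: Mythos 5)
Your proposal is correct and follows exactly the route the paper intends: the paper presents Theorem~\ref{main2} as a direct consequence of Theorem~\ref{main}, obtained by observing that $\mathcal{P}$-robustness is by Definition~\ref{def:Pfr} precisely $V$-robustness for each $V\in\mathcal{P}$, intersecting the single-perturbation characterizations, and using that an intersection of von Neumann algebras each containing $\{H\}''$ is again a von Neumann algebra containing $\{H\}''$. No gaps.
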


As in the case of a single perturbation, the set of robust symmetries $\mathcal{R}_{\mathcal{P}}(H)$ forms a von Neumann algebra, thus it is closed with respect to the algebraic operations, as well as to taking the adjoint and taking strong limits. However, for a generic set $\mathcal{P}$ of perturbations
$\mathcal{R}_{\mathcal{P}}(H)$ is no longer the commutant of an Abelian algebra and has a more complex structure, which depends on $\mathcal{P}$. 

In the following two subsections we will look at the fine structure of the set of robust symmetries in the mathematical (and physical) interesting situation where the set of perturbations has itself a natural algebraic structure.

\subsection{Symmetry-restricted perturbations}

We want to analyze some interesting classes $\mathcal{P}$ of perturbations of the Hamiltonian~$H$. There are plenty of  physical situations where, by fundamental reasons, the dynamical laws of the system must be invariant under some symmetry group. Paradigmatic examples are given by the isotropy of space which implies rotational symmetry, or by the principle of relativity which implies relativistic invariance, etc.
Those protected symmetries restricts the class of possible dynamics and thus of possible Hamiltonians $H$ \emph{and} their  perturbations $\mathcal{P}$.

Let us  consider $n$ protected Hermitian symmetries $J_1,J_2,\dots,J_n\in\{H\}'$, that in general are not assumed to commute among themselves. In the above physical examples, these can be, e.g., the generators of the rotation group, or, in relativistic systems, of the Lorentz group. 

What happens when the set of perturbations is made by the Hermitian operators of $\{J_1,J_2,\dots, J_n\}'$? This is exactly the largest set of perturbations that leave the above $n$ symmetries unbroken. We are going to prove that the set $\mathcal{R}_\mathcal{P}(H)$ of robust symmetries against this kind of perturbations has a very nice algebraic structure.

We first consider the situation of a single protected symmetry. Let $J\in\{H\}'$ be a Hermitian symmetry of  $H$ and consider as set of perturbations $\mathcal{P}=\{J\}'_h$, the Hermitian elements of the commutant of $J$, which is the largest set of perturbations which keep $J$ unbroken. This would be, for example, the case of a system with cylindrical symmetry where, say, $J=J_z$ is conserved. In such a case the set of robust symmetries  (against this kind of perturbations) acquires the algebraic structure of a bicommutant.

\begin{lemma}
\label{thm:symmetry}
Let $H$ be self-adjoint with compact resolvent and $J\in\{H\}'$ be a  Hermitian symmetry of $H$. Let  $\mathcal{P}=\{J\}'_h$, where $\{J\}'_h$ are the Hermitian elements of $\{J\}'$. Then, a symmetry $S \in\{H\}'$ is $\mathcal{P}$-robust if and only if $S\in\{H,J\}''$, 
 where
        \begin{equation}
       \{H,J\}''
        = \{Q_{k} \,:\, k\geq 1 \}''
        \label{bicommutantA}
    \end{equation}
    is the bicommutant of  $H$ and $J$, with $Q_{k}$ being the common eigenprojections of $H$ and~$J$. Therefore,
     \begin{equation}
     	\mathcal{R}_\mathcal{P}(H)=\{H,J\}''.
     \end{equation}
\end{lemma}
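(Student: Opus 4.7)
The plan is to prove the two inclusions $\{H,J\}''\subseteq\mathcal{R}_\mathcal{P}(H)$ and $\mathcal{R}_\mathcal{P}(H)\subseteq\{H,J\}''$ separately. Both will rest on the criterion in Theorem~\ref{main}(ii): a perturbation $V$ that commutes with $H$ and is self-adjoint leaves $S$ robust iff $[S,V]=0$. So the main task is to generate enough perturbations in $\mathcal{P}\cap\{H\}'_h$ to detect exactly when $S$ fails to be in the bicommutant.

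For the inclusion $\{H,J\}''\subseteq\mathcal{R}_\mathcal{P}(H)$, I would first observe that every common eigenprojection $Q_k$ lies in $\{J\}''$, so it commutes with every element of $\{J\}'$ and in particular with every $V\in\mathcal{P}$; since it also commutes with $H$, it commutes with every perturbed Hamiltonian $H+\varepsilon V$, and is therefore $V$-unbroken (in particular $V$-robust) for each $V\in\mathcal{P}$. Since $\mathcal{R}_\mathcal{P}(H)$ is a von Neumann algebra by Theorem~\ref{main2}, it contains the von Neumann algebra generated by the $Q_k$'s, i.e.\ $\{H,J\}''$.

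For the reverse inclusion, my plan is to argue by contraposition: starting from $S\in\{H\}'$ with $S\notin\{H,J\}''$, I would exhibit a perturbation $V\in\mathcal{P}$ commuting with $H$ for which $[S,V]\neq 0$, so that $S$ is $V$-fragile by Theorem~\ref{main}(ii). I would split into two cases. If $[S,J]\neq 0$, the choice $V=J$ works: $J$ is bounded and Hermitian and belongs to $\{J\}'\cap\{H\}'$. If instead $[S,J]=0$, then $S\in\{H,J\}'$ but not in the bicommutant $\{Q_k:k\geq 1\}''$, which by~\eqref{eq:bicommutantCR} consists exactly of operators of the form $\sum_k a_k Q_k$ with $(a_k)\in\ell^\infty$. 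This forces $S|_{Q_l\mathcal{H}}$ to fail to be a scalar multiple of the identity for some index $l$. I would pick two eigenvectors $\phi_1,\phi_2\in Q_l\mathcal{H}$ of the Hermitian restriction $S|_{Q_l\mathcal{H}}$ with distinct eigenvalues $s_1\neq s_2$, and set $V=\ket{\phi_1}\bra{\phi_2}+\ket{\phi_2}\bra{\phi_1}$. Since $V$ is supported in $Q_l\mathcal{H}$, which sits inside a single $J$-eigenspace and a single $H$-eigenspace, $V$ automatically commutes with both $J$ and $H$, hence $V\in\mathcal{P}\cap\{H\}'_h$. A direct computation then gives $[S,V]=(s_1-s_2)\bigl(\ket{\phi_1}\bra{\phi_2}-\ket{\phi_2}\bra{\phi_1}\bigr)\neq 0$, concluding the case.

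The main obstacle I expect is the second case: isolating the ``non-scalar'' portion of $S$ with respect to the lattice $\{Q_k\}$ of common eigenprojections and converting it into an explicit fragile perturbation. The crucial observation is that one can localize $V$ inside a single common eigenspace $Q_l\mathcal{H}$, which automatically places $V$ in $\{J\}'\cap\{H\}'_h$; this lets us invoke the clean commutation criterion of Theorem~\ref{main}(ii) rather than having to analyze perturbed eigenprojections directly, and it is precisely this freedom of choosing $V$ that allows the von Neumann algebra $\mathcal{R}_\mathcal{P}(H)$ to shrink from $\{H\}'$ all the way down to the bicommutant $\{H,J\}''$.
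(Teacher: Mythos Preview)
Your argument for the inclusion $\{H,J\}''\subseteq\mathcal{R}_\mathcal{P}(H)$ has a genuine gap: the common eigenprojections $Q_k$ do \emph{not} in general lie in $\{J\}''$. Each $Q_k$ is a product of a spectral projection of $H$ with one of $J$, hence typically a strict subprojection of a $J$-eigenprojection and not a function of $J$ alone. Consequently $Q_k$ need not commute with a generic $V\in\{J\}'_h$, and need not be $V$-unbroken. Concretely, take $\mathcal{H}=\mathbb{C}^4$ with orthonormal basis $e_1,\dots,e_4$, let $H$ have eigenvalue $h_1$ on $\mathrm{span}\{e_1,e_2\}$ and $h_2\neq h_1$ on $\mathrm{span}\{e_3,e_4\}$, and set $J=\ket{e_1}\bra{e_1}+\ket{e_3}\bra{e_3}$. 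Then $V=\ket{e_1}\bra{e_3}+\ket{e_3}\bra{e_1}\in\{J\}'_h$, yet $Q_1=\ket{e_1}\bra{e_1}$ fails to commute with $V$ and is not conserved by $H+\varepsilon V$ for $\varepsilon\neq 0$. The paper's route here is more delicate and does not go through unbrokenness: since both $H$ and $V$ commute with $J$, so does $H+\varepsilon V$, hence each perturbed eigenprojection $P_n(\varepsilon)$ commutes with $J$ for $\varepsilon\neq 0$; by continuity $[P_n(0),Q_k]=0$, and then $[S,P_n(0)]=0$ for every $S\in\{H,J\}''$, so Theorem~\ref{main}(i) yields $V$-robustness.

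Your reverse inclusion is essentially the paper's argument recast in contrapositive form and is correct in outline. One small repair: symmetries $S$ are not assumed Hermitian, so $S|_{Q_l\mathcal{H}}$ need not be diagonalizable and your eigenvector construction may fail. It suffices instead to observe that any non-scalar operator on the finite-dimensional space $Q_l\mathcal{H}$ fails to commute with some Hermitian $W$, and then take $V=Q_lWQ_l\in\{H,J\}'_h\subseteq\mathcal{P}$.
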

\begin{proof} 
We start by proving that if $S$ is a $\mathcal{P}$-robust symmetry then $S \in \{H,J\}''$. 

Notice first that, by von Neumann's bicommutant theorem
\begin{equation}
	\{H,J\}''
        = \{Q_{k} \,:\, k\geq 1 \}''
        = \Bigl\{\sum_{k \geq 1} a_{k} Q_{k}      
        \,:\, (a_{k})_{k \geq 1}\in l^{\infty}
        \Bigr\},
 \label{eq:bicommutantA1}
\end{equation}
where the convergence of the series is in the strong topology.  Consider perturbations in $V=V^\dag \in \{H,J\}'$. Then $V\psi=\sum_{k\ge 1}Q_k V Q_k\psi$, for all $\psi\in\mathcal{H}$. Since $V$ commutes with $H$, by Theorem~\ref{main}$(ii)$, we get that
    \begin{eqnarray}
       \left[S,V\right]=0, \quad \text{for all } V=V^\dag \in \{H,J\}'.
    \end{eqnarray}
   By choosing $V= \sum_{k\geq 1} (1/k) Q_k$ we get
   \begin{equation*}
        S=\sum_{k\ge1}Q_k S Q_k,
    \end{equation*}
    in the strong topology.    
    By choosing for any $k\geq 1$ the perturbation in the form $V= Q_k W Q_k$ with arbitrary $W=W^\dag$ we get
\begin{equation*}
        \left[Q_k S Q_k, Q_k W Q_k\right]=0, \quad \text{for all } W=W^\dag \in B(\mathcal{H}),
    \end{equation*}
 which gives, according to Schur's lemma, $Q_k S Q_k = s_k Q_k$, for some $s_k\in\R$ with $|s_k|\leq \|S\|$.
 We conclude that
    \begin{equation*}
        S=\sum_{k\ge1}s_k Q_k \in\{H,J\}''.
    \end{equation*}

Now we prove that if $S\in\{H,J\}''$ then $S$ is a $\mathcal{P}$-robust symmetry, with $\mathcal{P}=\{J\}'_h$.
According von Neumann's bicommutant theorem~\eqref{eq:bicommutantA1}, 
there is a sequence $(s_{k})_{k \geq 1} \in \ell^{\infty}$ such that
    \begin{equation*}
        S=\sum_{k\geq 1} s_{k}Q_{k},
    \end{equation*}
in the strong topology. 
   
 Let us fix a perturbation $V\in\{J\}'_h$. 
Since both $H$ and $V$ commute with $J$, then the same has to be true for $H(\varepsilon)=H+\varepsilon{V}$, with $\varepsilon\in{I}$. This clearly means that 
\begin{equation*}
    \comm{Q_{k}}{P_n(\varepsilon)}=0,
\end{equation*}
for $\varepsilon\in{I}$, $\varepsilon\neq{0}$ and $n,k\ge{1}$. By taking the limit $\varepsilon\rightarrow0$ and by using the continuity of $P_n(\varepsilon)$, we get
\begin{equation*}
    \comm{Q_{k}}{P_n(0)} =0.
    \label{eq:commut}
\end{equation*}
Thus, for all $n \geq 1$ and $\psi\in \mathcal{H}$,
    \begin{eqnarray*}
        \comm{S}{P_{n}(0)}\psi=\sum_{k\ge1}s_{k}\comm{Q_{k}}{P_{n}(0)}\psi=0,
        \label{bicomm}
    \end{eqnarray*}
   so that  $S$ is $V$-robust by Theorem~\ref{main}$(ii)$. 
By the arbitrariness of $V$ we get the thesis.
\end{proof}

We are now ready to tackle the generic situation of an arbitrary family of protected symmetries.

\begin{theorem}
\label{thm:restricted}
    Let $H$ be self-adjoint with compact resolvent. Let $\mathcal{P}= \mathcal{J}'_h$, with $\mathcal{J}\subseteq\{H\}'_h$ being a family of Hermitian protected symmetries of $H$.  Then
    \begin{equation}
        \mathcal{R}_\mathcal{P}(H)=\left(\{H\}\cup\mathcal{J}\right)''.
    \end{equation}
\end{theorem}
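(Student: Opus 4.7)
The plan is to reduce the statement to Theorem~\ref{main} by proving the two set-theoretic inclusions separately, exploiting the fact that the set $\mathcal{A}:=\{H\}\cup\mathcal{J}$ consists of self-adjoint operators, so its commutant $\mathcal{A}'$ is a von Neumann algebra closed under adjoints. Throughout I will interpret $\{H\}'$ as the commutant generated by the spectral projections of $H$, as in Remark~\ref{rmk:commutant}.

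For the inclusion $(\{H\}\cup\mathcal{J})''\subseteq\mathcal{R}_\mathcal{P}(H)$, I would take $S\in\mathcal{A}''$ and an arbitrary perturbation $V\in\mathcal{P}=\mathcal{J}'_h$. Since $H$ and $V$ both commute with every $J\in\mathcal{J}$, so does $H(\varepsilon)=H+\varepsilon V$, and hence so do its spectral projections $P_n(\varepsilon)$ for $\varepsilon\in I\setminus\{0\}$. Taking $\varepsilon\to 0$ and using the continuity granted by Kato's Theorem~\ref{th:KATO}, $P_n(0)$ also commutes with every $J\in\mathcal{J}$. Moreover, being a subprojection of an eigenprojection $P_k$ of $H$, the projection $P_n(0)$ commutes with every spectral projection of $H$ (indeed $P_n(0)P_j=\delta_{jk}P_n(0)=P_j P_n(0)$). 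Thus $P_n(0)\in\mathcal{A}'$, so $S\in\mathcal{A}''$ commutes with $P_n(0)$ for every $n\ge 1$. Theorem~\ref{main}(i) then yields that $S$ is $V$-robust; since $V\in\mathcal{P}$ is arbitrary, $S\in\mathcal{R}_\mathcal{P}(H)$.

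For the reverse inclusion $\mathcal{R}_\mathcal{P}(H)\subseteq(\{H\}\cup\mathcal{J})''$, I would take $S\in\mathcal{R}_\mathcal{P}(H)\subseteq\{H\}'$ and show that $S$ commutes with every $T\in\mathcal{A}'$. Since $\mathcal{A}$ is self-adjoint, $\mathcal{A}'$ is $*$-closed, so by splitting $T=T_1+\rmi T_2$ into Hermitian and anti-Hermitian parts it suffices to treat $T\in\mathcal{A}'_h$. Any such $T$ is bounded (hence $H$-bounded with $H$-bound $0$), Hermitian, commutes with every $J\in\mathcal{J}$, and commutes with $H$. The first three properties place $T$ in $\mathcal{J}'_h=\mathcal{P}$, so $S$ is $T$-robust; the last property, combined with Theorem~\ref{main}(ii), then forces $\comm{S}{T}=0$. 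Thus $S\in\mathcal{A}''$.

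The main conceptual content lies in the first inclusion, where one has to propagate the protected symmetries $\mathcal{J}$ through the spectral projections of the perturbed Hamiltonian and then through the limit $\varepsilon\to 0$; the argument is short but hinges essentially on the continuity of $\varepsilon\mapsto P_n(\varepsilon)$ from Kato's theorem. The second inclusion is almost formal once one notices that $\mathcal{A}'_h$ consists precisely of bounded self-adjoint perturbations that commute with $H$, so Theorem~\ref{main}(ii) immediately converts $\mathcal{P}$-robustness into a commutation relation. The potential pitfall is notational rather than substantive: one must be careful that writing $(\{H\}\cup\mathcal{J})''$ really means the bicommutant of the set consisting of the spectral projections of $H$ together with the elements of $\mathcal{J}$, so that the unboundedness of $H$ poses no problem.
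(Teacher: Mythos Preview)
Your proof is correct, and the second inclusion $\mathcal{R}_\mathcal{P}(H)\subseteq(\{H\}\cup\mathcal{J})''$ is essentially identical to the paper's: both split an arbitrary $T\in(\{H\}\cup\mathcal{J})'$ into Hermitian parts and invoke Theorem~\ref{main}(ii).

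For the first inclusion you take a more direct route than the paper. The paper proceeds via Lemma~\ref{thm:symmetry}: it first establishes the single-symmetry case $\mathcal{R}_{\{J\}'_h}(H)=\{H,J\}''$, then observes $\mathcal{P}=\bigcap_{J\in\mathcal{J}}\{J\}'_h$, deduces $\bigcup_{J\in\mathcal{J}}\{H,J\}''\subseteq\mathcal{R}_\mathcal{P}(H)$, and closes up by taking bicommutants and using that $\mathcal{R}_\mathcal{P}(H)$ is a von Neumann algebra. You instead show directly that for any $V\in\mathcal{J}'_h$ the limiting projections $P_n(0)$ lie in $(\{H\}\cup\mathcal{J})'$, by propagating commutation with each $J\in\mathcal{J}$ through $H(\varepsilon)$ and its spectral projections for $\varepsilon\neq 0$, and then passing to the limit via Kato's continuity. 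This is precisely the core mechanism buried inside the proof of Lemma~\ref{thm:symmetry}, but applied simultaneously to all $J\in\mathcal{J}$ rather than one at a time. Your argument therefore bypasses the lemma and the subsequent bicommutant gymnastics entirely, at the cost of not isolating the single-symmetry case as a standalone result; the paper's detour, conversely, gives Lemma~\ref{thm:symmetry} independent visibility.
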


\begin{proof}
    We are going to prove the double inclusion. First of all let us prove that $\mathcal{R}_\mathcal{P}(H)\supseteq \left(\{H\}\cup\mathcal{J}\right)''$.
    
    Let $\mathcal{P}_J=\{J\}'_h$ for $J\in\mathcal{J}$.   Since $\mathcal{P}=\mathcal{J}'_h=\bigcap_{J\in\mathcal{J}}\mathcal{P}_J$, every  $\mathcal{P}_J$-robust symmetry is also $\mathcal{P}$-robust, that is
    \begin{equation*}
    	\bigcup_{J\in\mathcal{J}} \mathcal{R}_{\mathcal{P}_J}(H)\subseteq\mathcal{R}_\mathcal{P}(H). 
    \end{equation*}
 Now, by the previous theorem we get that $\mathcal{R}_{\mathcal{P}_J}(H)=\{H,J\}''$, whence
    \begin{equation*}
       \bigcup_{J\in\mathcal{J}} \{H,J\}''\subseteq\mathcal{R}_\mathcal{P}.
    \end{equation*}
    Taking the bicommutant of both members, considering that $\mathcal{R}_\mathcal{P}$ is a von Neumann algebra, i.e. $\mathcal{R}_\mathcal{P}''=\mathcal{R}_\mathcal{P}$ , we get
    \begin{equation*}
        \Bigl(\bigcup_{J\in\mathcal{J}} \{H,J\}''\Bigr)''\subseteq\mathcal{R}_\mathcal{P}.
    \end{equation*}
    By using the properties of the bicommutant we can compute the left hand side
    \begin{equation*}
       \Bigl(\bigcup_{J\in\mathcal{J}} \{H,J\}''\Bigr)'' = \Bigl(\bigcup_{J\in\mathcal{J}} \{H,J\}\Bigr)''= \left(\{H\}\cup\mathcal{J}\right)''.
    \end{equation*}
    Then we have proved that
    \begin{equation*}
        \left(\{H\}\cup\mathcal{J}\right)''\subseteq\mathcal{R}_\mathcal{P}.
    \end{equation*}
    
Now we are going to prove that $\mathcal{R}_\mathcal{P}\subseteq \left(\{H\}\cup\mathcal{J}\right)''$. Consider $S\in\mathcal{R}_\mathcal{P}$. We have to show that $\comm{S}{B}=0$, for all ${B\in \left(\{H\}\cup\mathcal{J}\right)'}$. Let us decompose $B$ as
\begin{equation*}
    B=V_1+\rmi{V_2},
\end{equation*}
where
\begin{equation*}
    V_1=\frac{B+B^\dagger}{2}\quad V_2=\frac{B-B^\dagger}{2\rmi}.
\end{equation*}
Since $H$ and $J\in\mathcal{J}$ are self-adjoint operators, $\left(\{H\}\cup\mathcal{J}\right)'$ is a von Neumann algebra. Then since $B\in \left(\{H\}\cup\mathcal{J}\right)'$, also $B^\dagger\in \left(\{H\}\cup\mathcal{J}\right)'$ and then $V_i\in\left(\{H\}\cup\mathcal{J}\right)'$, for $i=1,2$. Then, since $S$ is $\mathcal{P}$-robust and $V_1, V_2\in \mathcal{P}$ commute with $H$, by Theorem~\ref{main}$(ii)$ we have that
\begin{equation*}
    \comm{S}{V_1}=\comm{S}{V_2}=0,
\end{equation*}
that is $\comm{S}{B}=0$.    
Therefore,
    \begin{equation*}
        \mathcal{R}_\mathcal{P}\subseteq \left(\{H\}\cup\mathcal{J}\right)''	,
    \end{equation*}
and the theorem is proved. 
\end{proof}

\subsection{Completely robust symmetries}

Finally, we provide a characterization of the completely robust symmetries of $H$, that are robust against unrestricted perturbations. 
\begin{theorem}
\label{complete}
Let $H$ be self-adjoint with compact resolvent. Let $S\in\{H\}'$ be a symmetry of $H$. Then $S$ is a completely robust symmetry if and only if $S\in\{H\}''$, 
 where
$\{H\}''$ is the bicommutant~\eqref{eq:bicommutantCR} of the Hamiltonian $H$. Therefore,
\begin{equation}
     \mathcal{R}(H)=\{H\}''.
\end{equation}
\end{theorem}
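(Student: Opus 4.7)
The plan is to obtain Theorem~\ref{complete} as a direct structural consequence of Theorem~\ref{main}, by using only a very small subclass of the allowed perturbations to pin $S$ down.

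First I would dispatch the easy inclusion $\{H\}''\subseteq\mathcal{R}(H)$. By Theorem~\ref{main}, for every symmetric $H$-bounded $V$ we have $\{H\}''\subseteq\mathcal{R}_{\{V\}}(H)$. Intersecting over all such $V$, and recalling that by definition $\mathcal{R}(H)=\bigcap_V\mathcal{R}_{\{V\}}(H)$, this gives $\{H\}''\subseteq\mathcal{R}(H)$. Concretely, if $S=f(H)\in\{H\}''$ with $f$ bounded Borel, then $S$ acts as the scalar $f(h_k)$ on $\mathrm{Range}(P_k)$, and since by Theorem~\ref{th:KATO}$(iii)$ each eigenprojection $P_n^{(V)}(0)$ of the perturbed Hamiltonian at $\varepsilon=0$ is a subprojection of some $P_k$, $S$ commutes with every $P_n^{(V)}(0)$, which is the $V$-robustness criterion of Theorem~\ref{main}$(i)$.

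For the converse inclusion $\mathcal{R}(H)\subseteq\{H\}''$, the key observation is that completely robust means robust against \emph{every} symmetric $H$-bounded perturbation, and I only need to exploit the very restricted subclass of bounded self-adjoint perturbations that commute with $H$. Every such $V$ has $H$-bound $0$, so the Kato-Rellich hypothesis is trivially satisfied. Let $S\in\mathcal{R}(H)$. Applying Theorem~\ref{main}$(ii)$, which says that for $V$ commuting with $H$ and self-adjoint the $V$-robustness of $S$ is equivalent to $[S,V]=0$, I conclude
\begin{equation}
[S,V]=0\quad\text{for every bounded self-adjoint } V\in\{H\}'.
\end{equation}

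Finally I would promote this to commutation with the whole commutant $\{H\}'$. Since $\{H\}'$ is a von Neumann algebra, it is closed under taking adjoints. Hence any $C\in\{H\}'$ decomposes as $C=V_1+\rmi V_2$ with $V_1=(C+C^\dagger)/2$ and $V_2=(C-C^\dagger)/(2\rmi)$ bounded self-adjoint elements of $\{H\}'$. Then $[S,C]=[S,V_1]+\rmi[S,V_2]=0$, so $S\in(\{H\}')'=\{H\}''$, giving $\mathcal{R}(H)\subseteq\{H\}''$ and completing the proof.

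There is no really difficult step here: the whole content is packaged in Theorem~\ref{main}$(ii)$, and the only slightly subtle point is remembering that $\{H\}'$ is $*$-closed, which permits the reduction from self-adjoint perturbations to arbitrary elements of $\{H\}'$. I expect the main temptation to avoid is trying to use wilder perturbations (e.g., ones that do not commute with $H$) in the converse direction, which is unnecessary and would complicate the argument.
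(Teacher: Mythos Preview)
Your proof is correct and follows essentially the same strategy as the paper: both directions rest on Theorem~\ref{main}, with the nontrivial inclusion $\mathcal{R}(H)\subseteq\{H\}''$ obtained by applying part~$(ii)$ to bounded self-adjoint perturbations $V\in\{H\}'$ and then passing from Hermitian to arbitrary elements of $\{H\}'$ via the real--imaginary decomposition.

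The only packaging difference is that the paper obtains Theorem~\ref{complete} as the special case $J=\mathbb{I}$ of Lemma~\ref{thm:symmetry}, whose proof for the hard inclusion proceeds constructively: it chooses the specific perturbations $V=\sum_k(1/k)Q_k$ and $V=Q_kWQ_k$ and invokes Schur's lemma to exhibit $S=\sum_k s_kQ_k$ explicitly. Your route is more abstract and slightly slicker for this particular statement: once you know $[S,V]=0$ for every Hermitian $V\in\{H\}'$, you conclude $S\in(\{H\}')'=\{H\}''$ directly from the bicommutant identity, without any explicit computation. The trade-off is that the paper's constructive argument generalizes immediately to the symmetry-restricted setting of Lemma~\ref{thm:symmetry} (where one genuinely needs the common eigenprojections $Q_k$ of $H$ and $J$), whereas your argument is tailored to the case where the commutant in question is all of $\{H\}'$.
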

\begin{proof}
    It follows from Lemma~\ref{thm:symmetry} with $J=\mathbb{I}$.
\end{proof}

The characterization of completely robust symmetries as the elements of the bicommutant of $H$, that are the bounded functions of the Hamiltonian, was proved in~\cite{kam} for finite-dimensional quantum systems. Theorem~\ref{complete} generalizes this result also to unbounded Hamiltonians with compact resolvent in an infinite-dimensional Hilbert space. 

The extension of the validity of this characterization to unbounded Hamiltonians could have been anticipated. Indeed, heuristically, the long-time dynamics is expected to be unaffected by the ultraviolet (i.e.\ high energy) behavior of the Hamiltonian. And the assumption of compact resolvent, which allows the application of Kato's perturbation theorem and ensures the stability of the pure-point spectrum, is an infrared (i.e.\ low energy) condition on the spectrum of $H$.

\section{Wandering range and adiabatic invariants}
\label{sec:adiabatic}

Before concluding this work we comment on some additional properties of robust symmetries that would be worth understanding better  and that provide a perspective for future investigations and some interesting connections with stability theory in classical Hamiltonian mechanics.

\subsection{Wandering range of a robust symmetry}
According to Definition~\ref{def:Pfr}, a symmetry $S$ is $V$-robust if, for all $\varphi\in\mathcal{H}$:
\begin{equation}
\label{eq:convergence}
   \delta_{H+\varepsilon V} (S; \varphi) =\sup_{t \in \R} \,\norm{(\rme^{it(H+\varepsilon V)}S\rme^{-it(H+\varepsilon V)}-S)\varphi} \to 0, \quad \text{as } \varepsilon\to0.
\end{equation}
Here we want to discuss the speed of convergence of the above limit.

For this purpose it would be useful to have an explicit upper bound of the divergence $\delta_{H+\varepsilon V} (S; \varphi)$, which physically represents the \textit{wandering range} of the perturbed evolution of the symmetry $\rme^{it(H+\varepsilon V)}S\rme^{-it(H+\varepsilon V)}$ around its unperturbed evolution $S$.

For finite-dimensional systems the following uniform bound for robust symmetries holds:
\begin{equation}
	\sup_{t \in \R} \,\norm{\rme^{it(H+\varepsilon V)}S\rme^{-it(H+\varepsilon V)}-S} \leq  \frac{14 \sqrt{d} \|V\|\|S\|}{\eta}  |\varepsilon|,
\end{equation}
where $d$ is the number of distinct eigenvalues of $H$ and $\eta$ is its minimal spectral gap~\cite{kam}. 

A natural question arises from the bound obtained for finite-dimensional systems: given a $V$-robust symmetry $S \in \{H\}'$ and a unit vector $\varphi \in \mathcal{H}$, one can ask whether it is possible to find a constant $C_\varphi >0$ and $\varepsilon_\varphi^* \in (0,1)$ such that for all $\varepsilon \in (-\varepsilon_\varphi^*, \varepsilon_\varphi^*)$:
\begin{equation}\label{speed1}
\sup_{t \in \R}\, \norm{(\rme^{{\rmi}t(H+\varepsilon V)}S\rme^{-{\rmi}t(H+\varepsilon V)}-S) \varphi} \leq{C_\varphi}\abs{\varepsilon}.
\end{equation}
Or, even more, whether the above bound holds in operator norm, i.e. for positive $C$ and $\varepsilon^*$ independent of $\varphi$.
If this were the case, one would have that the wandering range $\delta_{H+\varepsilon V} (S; \varphi)$ of the robust symmetry $S$ is of
$O(\varepsilon)$ (and uniformly in $\varphi$ for a bound in operator norm).

The answer to the above question is negative (if $\mathcal{H}$ is infinite-dimensional), as shown in the following example: in general the convergence~\eqref{eq:convergence} is  not uniform in $\varphi$ (with $\|\varphi\|=1$) and hence does not hold in operator norm. Moreover, there are vectors $\varphi$ such that the wandering range is of $O(|\varepsilon|^\gamma)$ with $\gamma>0$ arbitrarily small. This is reminiscent of the phenomenon of Arnold diffusion in classical mechanics~\cite{Arnold_diffusion}.

\begin{example}
Let $H$ be the Hamiltonian of a one-dimensional harmonic oscillator with mass $m=1$ and frequency $\omega=1$ on the Hilbert space $L^2(\R)$, 
\begin{equation}
    H=\frac{1}{2}\left(\hat{p}^2+\hat{x}^2\right),
\end{equation}
with domain $D(H)\subset H^2(\R)$, where $\hat{p}=-i \frac{d}{dx}$ is the momentum operator and $\hat{x}$ is the position operator. 
The Hamiltonian $H$ has compact resolvent and its spectrum is discrete with simple eigenvalues
\begin{equation}
    h_n=n+\frac{1}{2},\quad n \in \mathbb{N} .
\end{equation}
Let us consider the self-adjoint $H$-bounded perturbation
\begin{equation}
    V=\hat{p},
\end{equation}
with domain $D(V)=H^1(\R)$.
For all $\varepsilon \in \R$ the perturbed Hamiltonian reads
\begin{equation}
    H(\varepsilon)=H+ \varepsilon V=\frac{1}{2}\left(\hat{p}^2+\hat{x}^2+2\varepsilon \hat{p}\right)=\frac{1}{2}\left((\hat{p}+\varepsilon )^2+\hat{x}^2\right)-\frac{\varepsilon^2}{2} , 
\end{equation}
with domain $D(H(\varepsilon))=D(H)$, that is the Hamiltonian of a harmonic oscillator with shifted momentum, with spectrum
\begin{equation}
    h_n(\varepsilon)=h_n-\frac{\varepsilon^2}{2}, \quad n \in\mathbb{N}.
\end{equation}
An easy computation shows that the family of unitary operators $\{U(\varepsilon)\}_{\varepsilon \in \R}$ in Lemma~\ref{lemma:123} is given by
\begin{equation}
    U(\varepsilon)=\rme^{{\rmi\varepsilon}\hat{x}}, \quad \varepsilon \in \R,
\end{equation}
and that the eternal block-diagonal approximation~(\ref{eqn:Htildee}) reads
\begin{equation}
    \tilde{H}(\varepsilon)=e^{-\rmi\varepsilon \hat{x}}H(\varepsilon)e^{\rmi\varepsilon\hat{x}}=H-\frac{\varepsilon^2}{2} .
\end{equation}
This means that $H$ and $\tilde{H}(\varepsilon)$ share the eigenprojections for all $\varepsilon \in \R$, hence all the symmetries of $H$ are $V$-robust, namely for all $S \in \{H\}'$ and for all $\psi \in L^2(\R)$,
\begin{equation}\label{limitepsto0}
 \delta_{H(\varepsilon)} (S; \psi) =\sup_{t \in \R}\, \norm{(\rme^{{\rmi}tH(\varepsilon)}S\rme^{-{\rmi}tH(\varepsilon)}-S)\psi}\to 0, \quad \text{as } \varepsilon\to 0.
\end{equation}

We want to construct a (robust) symmetry $S$ of $H$ and a vector $\psi$ of the Hilbert space $L^2(\R)$ such that~(\ref{speed1}) is not true.
Consider the following symmetry of $H$
\begin{equation}
    S=\frac{1-\Pi}{2},
\end{equation}
where $\Pi$ is the parity operator, $\Pi \psi(x) =\psi(-x)$, for $\psi\in L^2(\R)$. Notice that in fact $S$ is  completely robust, since $\Pi\in\{H\}''$ (see Theorem~\ref{complete}), being a bounded function of the Hamiltonian $H$: 
\begin{equation}
	\Pi =\rmi \rme^{-\rmi{\pi}H}.
	\label{eq:PiH}
\end{equation}

Consider the wave function
\begin{equation}
    \psi_{\alpha}(x)=\frac{1}{(1+x^2)^{\alpha/4}},\quad x \in \R,
\end{equation}
with $\alpha >1$. Since $S\psi_{\alpha}=0$, we have that for all $\varepsilon \in \R$:
\begin{eqnarray}
    \sup_{t \in \R} \, \norm{\left(\rme^{{\rmi}tH(\varepsilon)}S\rme^{-{\rmi}tH(\varepsilon)}-S\right)\psi_{\alpha}}&=\sup_{t \in \R} \,\norm{S\rme^{-{\rmi}tH(\varepsilon)}\psi_{\alpha}}\nonumber\\
    &=\sup_{t \in \R} \,\norm{S \rme^{\rmi\varepsilon \hat{x}}\rme^{{-\rmi}t(H-{\varepsilon^2}/{2})}\rme^{-\rmi\varepsilon \hat{x}}\psi_{\alpha}}\nonumber\\
    &=\sup_{t \in \R} \,\norm{S\rme^{\rmi\varepsilon \hat{x}}\rme^{{-\rmi}tH}\rme^{-\rmi\varepsilon \hat{x}}\psi_{\alpha}}\nonumber\\
    &\geq\norm{S \rme^{\rmi\varepsilon \hat{x}}\rme^{{-\rmi}\pi H}\rme^{-\rmi\varepsilon \hat{x}}\psi_{\alpha}}\nonumber \\
    &= \norm{-\rmi S \rme^{\rmi\varepsilon \hat{x}}\Pi\,\rme^{-\rmi\varepsilon \hat{x}}\psi_{\alpha}},
\end{eqnarray}
where in the last equality we used~\eqref{eq:PiH}.
Since for all $\varepsilon \in \R$
\begin{equation}
 \left(-\rmi{S\rme^{\rmi\varepsilon \hat{x}}\Pi\rme^{-\rmi\varepsilon\hat{x}}\psi_{\alpha}}\right)(x)=
  \sin{\left(2\varepsilon x\right)}\psi_{\alpha}(x), \quad x \in \R,
\end{equation}
we have that 
\begin{eqnarray}
    \norm{-\rmi S\rme^{\rmi\varepsilon \hat{x}}\Pi\rme^{-\rmi\varepsilon \hat{x}}\psi_{\alpha}}^2&=\int_{\mathbb{R}}{\sin^2{\left(2\varepsilon{x}\right)}}\abs{\psi_{\alpha}(x)}^2\rmd{x}=\int_{\mathbb{R}}\frac{\sin^2{\left(2\varepsilon{x}\right)}}{\left(1+x^2\right)^{\frac{\alpha}{2}}}\rmd{x}\nonumber \\
    &=\abs\varepsilon^{\alpha-1}\int_{\mathbb{R}}\frac{\sin^2{\left(2{x}\right)}}{\left(\varepsilon^2+x^2\right)^{\frac{\alpha}{2}}}\rmd{x}
    \nonumber\\
    &\ge\abs\varepsilon^{\alpha-1}\int_{\mathbb{R}}\frac{\sin^2{\left(2{x}\right)}}{\left(1+x^2\right)^{\frac{\alpha}{2}}}\rmd{x},
\end{eqnarray}
where the last inequality is true for $\abs\varepsilon<1$. Therefore, for all $\varepsilon \in (-1,1)$:
\begin{equation}
    \delta_{H(\varepsilon)} (S; \psi_\alpha) =\sup_{t \in \R} \,\norm{\left(\rme^{{\rmi}tH(\varepsilon)}S\rme^{{-\rmi}tH(\varepsilon)}-S\right)\psi_{\alpha}}\ge{c_\alpha}|\varepsilon|^{\frac{\alpha-1}{2}},
\end{equation}
where
\begin{equation}
    c_\alpha=\left(\int_{\mathbb{R}}\frac{\sin^2{\left(2{x}\right)}}{\left(1+x^2\right)^{\frac{\alpha}{2}}}d{x}\right)^{\frac{1}{2}} < + \infty,
\end{equation}
because $\alpha>1$. Therefore, (\ref{speed1}) cannot hold if $\alpha \in (1,3)$. In fact, the speed of convergence can be arbitrarily slow for $\alpha\approx 1$.

This example also shows that the convergence cannot be in operator norm. Indeed, since $c_\alpha / \|\psi_\alpha\| \to  1/\sqrt{2}$, as $\alpha\downarrow 1$, then  
\begin{eqnarray}
 \fl\qquad\qquad   \sup_{t \in \R} \,\norm{\rme^{{\rmi}tH(\varepsilon)}S\rme^{{-\rmi}tH(\varepsilon)}-S}&\ge& \norm{\rme^{{\rmi}\pi H(\varepsilon)}S\rme^{{-\rmi}\pi H(\varepsilon)}-S}
    \nonumber\\
    &\ge& 
    \sup_{\alpha > 1} \frac{\norm{\left(\rme^{{\rmi}\pi H(\varepsilon)}S\rme^{{-\rmi}\pi H(\varepsilon)}-S\right)\psi_{\alpha}}}{\|\psi_\alpha\|}
    \ge
     \frac{1}{\sqrt{2}}.
\end{eqnarray}

\end{example}
\subsection{Quantum adiabatic invariants}
\label{sec:5}

When the Hamiltonian of a quantum system is continuously deformed, a symmetry is in general no longer a conserved quantity with respect to the perturbed dynamics. However, if a symmetry is robust against the perturbation, it is possible to continuously deform it and obtain a new symmetry of the perturbed dynamics, called an \emph{adiabatic invariant}.
\begin{theorem}
\label{thm:adinv}
Let $H(\varepsilon)$ be a continuous deformation of $H$ as in Theorem~\ref{thm:main0} and
let $\{U(\varepsilon)\}_{\varepsilon \in I}$ be the corresponding strongly continuous family of unitary operators. Let $S \in \{H\}'$ be a robust symmetry against the deformation $H(\varepsilon)$. We define for all $\varepsilon \in I$ 
\begin{equation}
    S_\varepsilon:=U(\varepsilon)SU(\varepsilon)^\dagger.
        \label{eq:rotation}
\end{equation}
Then $S_\varepsilon$ is a symmetry for the system with Hamiltonian
$H(\varepsilon)$, i.e. $S_\varepsilon \in \{H(\varepsilon)\}'$, for all $\varepsilon \in I$. 
\end{theorem}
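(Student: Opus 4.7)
The plan is to translate the commutation relation that makes $S$ robust (namely $S\in\{P_n(0):n\geq 1\}'$, by Theorem~\ref{thm:main0}) through the unitary conjugation by $U(\varepsilon)$, and then invoke Remark~\ref{rmk:commutant} to pass from commutation with the spectral projections of $H(\varepsilon)$ to membership in $\{H(\varepsilon)\}'$.

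First, since $S$ is robust against the deformation, Theorem~\ref{thm:main0} gives $[S,P_n(0)]=0$ for every $n\geq 1$. Next, using assumption~$(iii)$ of Theorem~\ref{thm:main0}, which states $P_n(\varepsilon)=U(\varepsilon)P_n(0)U(\varepsilon)^{\dagger}$, I would compute directly
\begin{equation*}
\comm{S_\varepsilon}{P_n(\varepsilon)}
= U(\varepsilon)SU(\varepsilon)^{\dagger}U(\varepsilon)P_n(0)U(\varepsilon)^{\dagger} - U(\varepsilon)P_n(0)U(\varepsilon)^{\dagger}U(\varepsilon)SU(\varepsilon)^{\dagger}
= U(\varepsilon)\comm{S}{P_n(0)}U(\varepsilon)^{\dagger}=0,
\end{equation*}
for all $n\geq 1$ and all $\varepsilon\in I$. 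Thus $S_\varepsilon$ commutes with every spectral projection of $H(\varepsilon)$.

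To conclude, note that $S_\varepsilon$ is bounded (it is a unitary conjugate of the bounded operator $S$), so by Remark~\ref{rmk:commutant} applied to the self-adjoint operator $H(\varepsilon)$, commutation with the family $\{P_n(\varepsilon)\}_{n\geq 1}$ of its spectral projections is equivalent to $S_\varepsilon\in\{H(\varepsilon)\}'$. Hence $S_\varepsilon$ is a symmetry of $H(\varepsilon)$, as claimed.

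There is no real obstacle here: the statement is essentially the observation that the intertwiner $U(\varepsilon)$ carries the spectral decomposition of $H$ (refined to the eigenprojections $P_n(0)$ induced by the deformation) onto that of $H(\varepsilon)$, so it automatically carries the commutant. The only point that requires a line of care is the transition from commuting with all spectral projections to commuting with the (possibly unbounded) operator $H(\varepsilon)$ itself, but this is exactly the content of the characterization of the commutant of a self-adjoint operator recalled in Remark~\ref{rmk:commutant}.
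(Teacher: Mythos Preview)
Your proof is correct and rests on the same idea as the paper's: the unitary $U(\varepsilon)$ conjugates $P_n(0)$ into $P_n(\varepsilon)$, so the commutation $[S,P_n(0)]=0$ (equivalent to robustness by Theorem~\ref{thm:main0}) is carried to $[S_\varepsilon,P_n(\varepsilon)]=0$. The only cosmetic difference is that the paper finishes by directly verifying $\rme^{\rmi t H(\varepsilon)}S_\varepsilon\rme^{-\rmi t H(\varepsilon)}=S_\varepsilon$ via the auxiliary operator $\tilde H(\varepsilon)=U(\varepsilon)^\dagger H(\varepsilon)U(\varepsilon)$, whereas you appeal to Remark~\ref{rmk:commutant} to pass from commutation with the spectral projections of $H(\varepsilon)$ to membership in $\{H(\varepsilon)\}'$.
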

\begin{proof} Let $\varepsilon \in I$. Since $S$ is a robust symmetry, according to Theorem~\ref{thm:main0}, $S$ commutes with all the projections $\{P_n(0)\}_{n \geq 1}$, hence $S$ commutes with $\tilde{H}(\varepsilon)$ in (\ref{eqn:Htildee}). Moreover, since $H(\varepsilon)=U(\varepsilon)\tilde{H}(\varepsilon)U(\varepsilon)^\dagger$, then
\begin{eqnarray*}
    \rme^{{\rmi}tH(\varepsilon)}S_\varepsilon{\rme^{-{\rmi}tH(\varepsilon)}}&=\rme^{{\rmi}tH(\varepsilon)}U(\varepsilon)SU(\varepsilon)^\dagger{\rme^{-{\rmi}tH(\varepsilon)}}\\
    &=U(\varepsilon)\rme^{{\rmi}t\tilde{H}(\varepsilon)}S\rme^{-{\rmi}t\tilde{H}(\varepsilon)}U(\varepsilon)^\dagger \\
    &=U(\varepsilon)SU(\varepsilon)^\dagger \\
    &=S_\varepsilon.
\end{eqnarray*}
\end{proof}
\begin{remark}
Equation~\eqref{eq:rotation} is just one possible definition of an adiabatic invariant. Indeed, it is enough to define for all $\varepsilon \in I$
\begin{equation}
    S_\varepsilon:=U(\varepsilon)\bar{S}_\varepsilon{U(\varepsilon)}^\dagger,
\end{equation}
where $\bar{S}_\varepsilon$ is any operator in ${\{\tilde{H}(\varepsilon)\}'}$ such that
\begin{eqnarray}
    \bar{S}_\varepsilon \psi \to S \psi, \quad \text{as }\varepsilon\to 0,\qquad \forall \psi \in \mathcal{H}.
\end{eqnarray}

Such construction works if and only if $S$ is a robust symmetry. This suggests an alternative definition of a robust symmetry: $S$ is robust against the continuous deformation $H(\varepsilon)$ if it  can be continuously deformed in such a way it is conserved with respect to the perturbed dynamics generated by $H(\varepsilon)$.

From this point of view a robust symmetry can be viewed as a symmetry which is not broken by the perturbations, but instead is only bent by them.
In classical mechanics  this is indeed the way in which one defines the invariant KAM tori~\cite{Arnold}. 
\end{remark}

\section{Conclusions}
\label{sec:conclusions}

In this work we have seen that it is possible, for dynamics generated by a compact-resolvent Hamiltonian $H$, to give a complete algebraic characterization of the symmetries robust against an arbitrary set $\mathcal{P}$ of perturbations of $H$. If a generic symmetry is characterized by the fact that it commutes with the eigenprojections of the Hamiltonian, the $V$-robust symmetries are the ones that commute with the subprojections induced by~$V$. The set of $\mathcal{P}$-robust symmetries, for an arbitrary $\mathcal{P}$ are then obtained as the intersection of the sets of  symmetries robust against a single perturbation. We have studied, in particular, the case in which the set of perturbations is the commutant of a set of Hermitian symmetries of the unperturbed Hamiltonian. We have seen that in such a case the set of robust symmetries is the von Neumann algebra generated by the Hamiltonian and this set of Hermitian symmetries. Then, we proved that the set of completely robust symmetries is given by the bicommutant of the Hamiltonian, generalizing the result of~\cite{kam} to unbounded Hamiltonians with compact resolvent in infinite-dimensional systems.

Notice that the hypothesis of compact resolvent of the Hamiltonian is  instrumental in deriving all our results. More precisely, it is essential that the family of operators $H(\varepsilon) := H + \varepsilon V$ has discrete spectrum for $\varepsilon$ sufficiently small, as it is clear from Theorem~\ref{thm:main0}. However, unless $H$ has compact resolvent, this is not in general true, even though $H$ has a discrete spectrum. 

It could be interesting to see what happens for Hamiltonians $H$ with discrete spectrum but with a vanishing gap, and/or with a continuum spectrum, as for instance happens in the case of the hydrogen atom. Clearly, additional conditions on the perturbations $V$ will be needed in order to ensure the stability of the spectrum of the Hamiltonian~\cite{poschel,discrete,gapless}.
Here, a quantum version of the KAM theory~\cite{Scherer,Scherer2,Kuksin_2013}, which reproduces the superconvergent scheme used by Kolmogorov in the classical case, might be useful.
Finally, it would be useful to find explicit bounds for the wandering range of a robust symmetry, in order to better  understand  its stability against a perturbation in infinite-dimensional quantum systems.

\ack
We acknowledge support from the Italian National Group of Mathematical Physics (GNFM-INdAM) and from PNRR MUR projects CN00000013-``Italian National Centre on HPC, Big Data and Quantum Computing'' and PE0000023-NQSTI. 
PF and VV acknowledge support from INFN through the project ``QUANTUM'' and from the Italian funding within the ``Budget MUR - Dipartimenti di Eccellenza 2023--2027''  - Quantum Sensing
and Modelling for One-Health (QuaSiModO).

\newcommand{\newblock}{}
\bibliographystyle{iopart-num}

\bibliography{iopart-num.bib}
\end{document}